\def\eps{\epsilon}%
\def\tensor{\,\raise2pt\hbox{${}_{\otimes}$}\,}
\def\fdg{\,:\,}
\def\ptl{\partial}
\def\rest#1{\raise-2pt\hbox{${\lfloor_{#1}}$}}
\def\cal#1{\mathcal{#1}}
\def\ip#1#2{\langle#1,#2\rangle}
\def\grad{{\nabla}}
\newcommand{\leftexp}[2]{{\vphantom{#2}}^{#1}{#2}}
\def\halb{\frac{1}{2}}
\newtheorem{theorem}{Theorem}[section]
\newtheorem{lemma}[theorem]{Lemma}
\newtheorem{proposition}[theorem]{Proposition}
\newtheorem{corollary}[theorem]{Corollary}
\newtheorem{remark}[theorem]{Remark}
\newtheorem{definition}[theorem]{Definition}
\begin{document}

\title[Scattering for the Equivariant U(1) Problem]{Scattering for the Equivariant U(1) Problem}


\author*[1]{\fnm{Nishanth} \sur{Gudapati}}\email{ngudapati@holycross.edu}

\affil*[1]{\orgdiv{Department of Mathematics}, \orgname{College of the Holy Cross}, \orgaddress{\street{1 College Street}, \city{Worcester}, \postcode{01610}, \state{MA}, \country{United States}}}


\abstract{ Extending our previous works on the Cauchy problem for the $2+1$ equivariant Einstein-wave map system, we prove that the linear part dominates the nonlinear part of the wave maps equation coupled to the full set of the Einstein equations, for small data. A key ingredient in the proof is a nonlinear Morawetz estimate for the fully coupled equivariant Einstein-wave maps. The $2+1$ dimensional Einstein-wave map system occurs naturally in the $3+1$   vacuum Einstein equations of general relativity.}

\keywords{wave maps, Einstein equations, U(1) symmetry}



\maketitle

\section{2+1 Equivariant Einstein-Wave Map System}
The aim of this work is to complete the task undertaken in \cite{BN_17}, on the scattering of the Cauchy problem of $2+1$ equivariant wave maps coupled to Einstein's equations for general relativity. Let $(M, g)$ be a 2+1 dimensional, globally hyperbolic, regular, Lorentzian spacetime what is foliated by 2 dimensional Riemannian hypersurfaces $(\Sigma, q)$ such that the group $SO(2)$ acts through isometries with a nonempty fixed point set $\Gamma$.  The fixed point set $\Gamma$ forms a timelike boundary of the quotient $M/ SO(2).$ In this setting, suppose $(M, g)$ can be represented in double null coordinates $(\xi, \eta, \theta)$ as 

\begin{align}
g =- e^{2Z} d\xi d\eta + r^2 d\theta^2 
\end{align}
where $Z = Z(\xi,\eta)$ and $r = r (\xi, \eta)$ are functions only of $(\xi, \eta)$, $\theta \in [0, 2\pi)$ is the parameter corresponding to $SO(2)$ action on $(\Sigma, q)$, $r(\xi,\eta) \in [0, \infty)$ is the radius of the area of group orbits of $SO(2)$ so that the curve $\{ r=0\}$ corresponds to the points on $\Gamma.$ $\xi$ and $\eta$ are the outgoing and incoming null coordinates respectively, such that 
\begin{align}
\xi = \eta, \quad \ptl_\xi r + \halb =0, \quad \ptl_\eta r - \halb =0 ,\quad \text{on}\quad \Gamma.
\end{align}
Now consider the 2+1 dimensional equivariant Einstein-wave map system

\begin{subequations}\label{eewm}
	\begin{align} 
	\mathbf{E}_{\mu \nu} =&\, \mathbf{T}_{\mu \nu} \label{eq:ee} \\
	\square_{g(u)} u=&\, \frac{\kappa^2 f_u(u) f(u)}{r^2}\label{eq:ewm},
	\end{align}
\end{subequations}
where $\mathbf{E}$ is the Einstein tensor of $(M, g)$ 
\[ \mathbf{E} \fdg= R_{\mu \nu} - \halb g_{\mu \nu} R_g, \] 
with the Ricci tensor $\mathbf{R}$ and the scalar curvature $R_g$ respectively. The tensor $\mathbf{T}_{\mu \nu}$ is the energy-momentum tensor associated to wave maps. It is given by
\begin{subequations}
    \begin{align}
\mathbf{T}_{\mu \nu}  &\fdg= \langle  \ptl_\mu U, \ptl_\nu U \rangle_{h(U)} - \halb g_{\mu \nu} \langle
\ptl^\sigma U, \ptl_\sigma U \rangle_{h(U)} \\
&=  h_{ij} \ptl_\mu U^i \ptl_\nu U^j - \halb g_{\mu \nu} g^{\alpha \beta}  h_{ij} \ptl_\alpha U^i \ptl_\beta U^j.
\end{align}
\end{subequations}
The map 
\begin{align} \label{map} 
U \fdg (M, g) \to (N, h)
\end{align}
is an equivariant wave map, where the target $(N, h)$ is a surface of revolution such that

\begin{align}
f(0) =0,\quad f'(0) =1 \quad \text{and} \quad \int^u_0 f(s) ds \to \infty \quad \text{as} \quad u \to \infty
\end{align}
with the odd and smooth generating function $f.$ The equivariant ansatz $U = (u, \kappa \theta)$, where $u$ is independent of the angular variable $\theta$, reduces the wave maps equation 
\[ \square_g\,  U^i + \leftexp{(h)} \Gamma^i_{jk} (U) g^{\mu \nu} \ptl_\mu  U^j\ptl_\nu U^k =0\]
to 
\begin{align}
\square_{g} u=&\, \frac{\kappa^2 f_u(u) f(u)}{r^2},
\end{align}

\noindent as it is well known. The constant $\kappa$ represents the degree of the equivariant wave map \eqref{map}. 
The precise value of the constant $\kappa$ is not relevant for the small data analysis in our paper. Thus, we set $\kappa=1$ for simplicity.  
In the double null coordinate system introduced above, the Ricci tensor $\mathbf{R}$ of $(M, g)$ is
\begin{align*}
\mathbf{R}_{\xi\xi} =& r^{-1} (2Z_\xi r_\xi - r_{\xi\xi}),\\
\mathbf{R}_{\xi\eta} =& -(2 Z_{\xi\eta} + r^{-1} r_{\xi \eta}),\\
\mathbf{R}_{\eta \eta} =& r^{-1} (2Z_\eta r_\eta - r_{\eta\eta}), \\
\mathbf{R}_{\theta\theta} =& 4 r\,e^{-2 Z}  r_{\xi\eta}, \\
\mathbf{R}_{\xi\theta} =& 0\,\,\text{and} \\
\mathbf{R}_{\eta\theta} =& 0.
\end{align*}
The scalar curvature is
\begin{align*}
R_g={}&8 e^{-2 Z} \bigl( \ptl_\xi \ptl_\eta Z + r^{-1}r_{\xi\eta} \bigr). 
\end{align*}

In the following we shall write down the Einstein tensor $ \mathbf{E}$ and the energy-momentum tensor $\mathbf{T}$ in the double null coordinates. Let us start with the Einstein tensor $\mathbf{E}:$

\begin{align*}
\mathbf{E}_{\xi\xi} =& r^{-1} (2 \ptl_\xi Z \, \ptl_\xi r- \ptl^2_\xi r),\\
\mathbf{E}_{\xi\eta} =& r^{-1} \ptl_ \xi \ptl_\eta\, r,\\
\mathbf{E}_{\eta \eta} =& r^{-1} (2 \ptl_\eta Z \ptl_\eta r - \ptl^2_\eta r),\\
\mathbf{E}_{\theta \theta} =& -4 r^2 \,e^{-2 Z} \ptl_\xi \ptl_\eta Z,\\
\mathbf{E}_{\xi \theta} =& 0\,\,\text{and} \\
\mathbf{E}_{\eta\theta} =& 0. \\
\intertext{In the computation of the energy-momentum tensor it will be helpful to compute the wave map Lagrangian $\mathcal{L}$, defined as,}
\mathcal{L} \fdg=& \ip{\ptl^\alpha U}{ \ptl_\alpha U}_{h(U)} = g^{\alpha \beta} h_{ij} (U) \ptl_{\alpha} U^i \ptl_{\beta} U^j  \\
=&-4 e^{-2Z} \ptl_\xi u \ptl_\eta u + \frac{f^2 (u)}{r^2}. 
\intertext{The components of $\mathbf{T}_{\mu \nu}$  are}
\mathbf{T}_{\xi \xi} =& \ptl_\xi u \, \ptl_\xi u,\\
\mathbf{T}_{\eta \eta} =& \ptl_\eta u \, \ptl_\eta u,  \\
\mathbf{T}_{\xi \eta} =& \frac{e^{2Z}}{4} \frac{f^2(u)}{r^2}, \\\ 
\mathbf{T}_{\theta\theta} =& \frac{r^2}{2} e^{-2Z}\left ( 4\ptl_\eta  u\ptl_\xi
u + e^{2Z}\frac{f^2(u)}{r^2} \right ). 
\end{align*} 
Suppose we introduce coordinates \[T = \frac{\xi + \eta}{2}, \quad R= \frac{\xi-\eta}{2}\]
\noindent we can construct the `retarded time' coordinates $(\eta, R, \theta)$ such that the metric $g$ of $M$ can be expressed as
\begin{align}
g = -e^{2Z (\eta, R)} d\eta^2 - 2 e^{2Z (\eta, R)} d\eta\, dR  + r^2 (\eta, R) d \theta^2.
\end{align}
Let us now calculate the relevant quantities in the retarded-time coordinates:

Ricci tensor $\mathbf{R}$
\begin{align*}
\mathbf{R}_{\eta\eta} =& r^{-1} \left( r \ptl^2_R Z - \ptl_R r \ptl_R Z + 2 \ptl_\eta Z \ptl_\eta r -2r \ptl^2 _{\eta R} Z - \ptl_\eta^2 r) \right) ,\\
\mathbf{R}_{\eta R} =& r^{-1}\left(\ptl_R Z \ptl_R r + r \ptl^2_R Z - 2r\ptl^2_{\eta R} Z - \ptl^2_{\eta R} r \right),\\
\mathbf{R}_{R R} =&r^{-1} \left( 2 \ptl_R Z \ptl_R r - \ptl^2 r \right) , \\
\mathbf{R}_{\theta\theta} =&  r e^{-2Z} (-\ptl^2_R r + 2 \ptl^2_{\eta R} r), \\
\mathbf{R}_{\eta\theta} =& 0\,\,\text{and} \\
\mathbf{R}_{R\theta} =& 0.
\end{align*}
The scalar curvature can be represented as
\begin{align}
R_g = -r^{-1}  2 e^{-2Z}( r \ptl^2_R Z + \ptl^2_R r -2r \ptl^2_{\eta R} Z - 2 \ptl^2_{\eta R}\, r )
\end{align}

Consequently, the Einstein tensor $\mathbf{E}$ is given by 
\begin{subequations}
	\label{bondi-ee}
	\begin{align}
	\mathbf{E}_{\eta\eta} =&  -r^{-1} ( -\ptl_R Z \ptl_R r + \ptl_R^2 r - \ptl^2_{R \eta} r )\notag \\
	&-r^{-1} \left( \ptl_\eta r \ptl_R Z + \ptl_R Z \ptl_\eta r - 2\ptl_\eta Z \ptl_\eta r - \ptl_{\eta R}^2 r + \ptl^2_\eta r   \right),\\
	\mathbf{E}_{\eta R} =&  - r^{-1}\left( - \ptl_R Z \ptl_R r + \ptl^2_R r - \ptl^2_{\eta R} r        \right) ,\\
	\mathbf{E}_{RR} =& r^{-1} \left(  2 \ptl_R Z \ptl_R r - \ptl_R^2 r \right),\\
	\mathbf{E}_{\theta \theta} =& e^{-2Z} r^2 \left( \ptl_R^2 Z - 2 \ptl^2_{\eta R} Z \right),\\
	\mathbf{E}_{R \theta} =& 0\,\,\text{and} \\
	\mathbf{E}_{\eta\theta} =& 0.
	\end{align}
\end{subequations}
\noindent The energy-momentum tensor $\mathbf{T}$ is given by

\begin{align}
\mathbf{T}_{\mu \nu}  \fdg= \langle  \ptl_\mu U, \ptl_\nu U \rangle_h - \halb g_{\mu \nu} \mathcal{L}.
\end{align}

\begin{align}
\mathcal{L} =& h_{ij} g^{\mu \nu} \ptl_\mu U^i \ptl_\nu U^j \notag \\
=& -e^{2Z} (\ptl_\eta u)^2 -2 e^{-2Z} \ptl_\eta u \, \ptl_R u + \frac{f^2(u)}{r^2}
\end{align}
\begin{subequations} 
	\label{bondi-se}
	\begin{align}
	\mathbf{T}_{\eta \eta} =& \halb \left(  (\ptl_\eta u)^2 -2 \ptl_\eta u \ptl_R u + e^{2Z} \frac{f^2 (u)}{r^2}\right),\\
	\mathbf{T}_{\eta R} =&\halb \left( - (\ptl_\eta u)^2 + \frac{f^2 (u)}{r^2} \right)  ,  \\
	\mathbf{T}_{RR} =& (\ptl_R u)^2  ,\\ 
	\mathbf{T}_{\theta\theta} =& \frac{r^2}{2} e^{-2Z}\left ( (\ptl_\eta u)^2 + 2 \ptl_\eta u \ptl_R u + e^{2Z}\frac{f^2(u)}{r^2} \right ). 
	\end{align}
\end{subequations}
If we introduce a time coordinate function $T \fdg =  \eta + R$,
we can represent the metric $g$ also in terms of $(T, R, \theta)$ coordinates
\[ ds^2 = e^{2Z(T, R)} (-dT^2 + dR^2) + r^2(T, R) d\theta^2.\]
The Ricci tensor in $(T, R, \theta)$ is

\begin{align}
\mathbf{R}_{TT} =&r^{-1} \left(  \ptl_R Z \ptl_R r+ r \ptl^2_R Z + \ptl_T Z \ptl_T r - r \ptl_T^2 Z - \ptl^2_T r\right)  ,\\
\mathbf{R}_{TR} =& r^{-1}\left(   \ptl_Rr \ptl_T Z + \ptl_R Z \ptl_T r - \ptl^2_{TR} r     \right),\\
\mathbf{R}_{R R} =&-r^{-1} \left(   -\ptl_R Z \ptl_R r + r \ptl_R^2 Z + \ptl^2_R r - \ptl_T Z \ptl_T r - r\ptl^2_T Z \right) , \\
\mathbf{R}_{\theta\theta} =&  r e^{-2Z} \left( -\ptl^2_T r + \ptl^2_R r  \right), \\
\mathbf{R}_{T\theta} =& 0\,\,\text{and} \\
\mathbf{R}_{R\theta} =& 0.
\end{align}

The scalar curvature is given by

\[ R_g = -2r^{-1} e^{-2Z} \left(  r (-\ptl^2_T Z + \ptl^2_R Z) + (-\ptl^2_T r + \ptl^2_R r) \right) \]

The Einstein tensor
\begin{align}
\mathbf{E}_{TT} =& r^{-1} \left( \ptl_R Z \ptl_R r + \ptl^2_R r - \ptl_T Z \ptl_T r \right),\\
\mathbf{E}_{TR} =& r^{-1} \left(  \ptl_R r \ptl_T Z + \ptl_R Z \ptl_T r - \ptl^2_{TR} r\right),\\
\mathbf{E}_{R R} =& r^{-1} \left(  \ptl_R Z \ptl_R r + \ptl_T Z \ptl_T r - \ptl^2_T r \right),\\
\mathbf{E}_{\theta \theta} =&  r^2 \,e^{-2 Z} \left(  -\ptl^2_T Z + \ptl^2_R Z \right),\\
\mathbf{E}_{T \theta} =& 0\,\,\text{and} \\
\mathbf{E}_{R\theta} =& 0.
\end{align}

Furthermore, we have

\[\mathcal{L} = - e^{-2Z} (\ptl_T u)^2 + e^{-2Z} (\ptl_R u)^2 + \frac{f^2(u)}{r^2}\]

Thus,
\begin{align}
\mathbf{T}_{TT}=& \halb \left( (\ptl_T u)^2 + (\ptl_R u)^2 + e^{2Z} \frac{f^2(u)}{r^2}   \right),\\
\mathbf{T}_{TR}=& (\ptl_T u \ptl_R u) ,\\
\mathbf{T}_{RR}=& \halb \left( (\ptl_T u)^2 + (\ptl_R u)^2 - e^{2Z} \frac{f^2(u)}{r^2}   \right), \\
\mathbf{T}_{\theta \theta} =& \halb r^2 \left(  -e^{-2Z} (\ptl_T u)^2 + e^{-2Z} (\ptl_R u)^2 + \frac{f^2(u)}{r^2}       \right)
\end{align}

\subsection*{Initial Value Problem}
We shall set up the initial value problem at a $T=0$ Cauchy hypersurface $(\Sigma_0,q_0)$. The initial data of geometric part of the $2+1$ Einstein-wave map system consists of a Riemannian 2-manifold $(\Sigma_0, q_0)$ and a symmetric 2-tensor $k_0$. 
The Einstein equations for general relativity are an overdetermined system, thus to formulate an initial value problem  the following constraint equations are imposed on $(\Sigma_0, q_0, k_0)$ 
\begin{subequations}
	\begin{align}
	R_q + tr(k_0)^2 -\Vert k_0 \Vert^2_q =& 2 \mathbf{e} \label{hamc} \\
	\grad(q_0)_a (k_0)^a_b - \grad(q_0)_b (k_0)^a_a =&\mathbf{m}, \label{momc}
	\end{align}
\end{subequations}
where  $\mathbf{e}$
and $\mathbf{m}$ are the energy and momentum densities respectively:

\begin{subequations}
\begin{align}
\mathbf{e} =&\, \halb \left( \Big\Vert \mathcal{N} (U) \Big\Vert^2_h + \big\Vert \mathcal{S} (U) \Big\Vert^2_h + \Big\Vert \frac{1}{r} \ptl_\theta (U) \Big\Vert^2_h \right) \\
\mathbf{m} =&\,  \mathcal{N} (U) \cdot \mathcal{S}(U)
\end{align}
\end{subequations}
with $\mathcal{N} \fdg = e^{-Z}\ptl_T$ and $\mathcal{S} \fdg = e^{-Z}\ptl_R.$

It follows from a standard result that there exists a unique (upto an isometry) smooth, globally hyperbolic, future development $(M, g, U)$ of the initial data $(\Sigma_0)$ \cite{Bruhat_Geroch_classic}. Subsequently, $k_0$ is the second fundamental form of the embedding $\Sigma_0 \hookrightarrow M$ and if $\mathcal{N}$ is the unit normal of $\Sigma_0 \hookrightarrow M$ the above system is 

\begin{subequations}
	\begin{align}
	\mathbf{E} (\mathcal{N},\mathcal{N}) = \mathbf{T}(\mathcal{N}, \mathcal{N}) \\
	\mathbf{E} (\mathcal{N}, e_1) = \mathbf{T}(\mathcal{N}, e_1).
	\end{align}
\end{subequations}

The initial data of the wave maps equation are 
\begin{subequations}
	\begin{align}
	U_0 \fdg& (\Sigma_0, q_0) \to (N, h) \\
	U_1 \fdg& (\Sigma_0, q_0) \to T_{U_0} (N, h)
	\end{align}
\end{subequations}
Therefore the 5-tuple $(\Sigma_0, q_0, k_0, U_0,U_1)$ that satisfies the constraint equations constitutes the initial data of the Einstein-wave map system. For our convenience, we introduce the following quantities on the initial Cauchy surface
\begin{subequations} \label{initialdata}
	\begin{align}
	u\vert_{\Sigma_0} = u_0,&\quad \ptl_T u\vert_{\Sigma_0} =u_1, \\
	Z\vert_{\Sigma_0} =Z_0,& \quad  \ptl_T Z\vert_{\Sigma_0} =Z_1, \\
	r\vert_{\Sigma_0} =r_0, &\quad \ptl_T r \vert_{\Sigma_0} =r_1.
	\end{align}
\end{subequations}
with $r_1 \vert_{\Gamma} =0$ and $\ptl_R r_0 \vert_{\Gamma} =1,$ in the $(T, R, \theta)$ coordinate system. 

\begin{remark}\label{rem:u-v}
	Define a function $v$ such that $u = Rv$, then the following results hold
	\begin{enumerate}
		\item $v$ satisfies the 4+1 wave equation: 
		\begin{equation}\label{Problem1}
		\left. \begin{array}{rcl}
		\leftexp{4+1}{\square}\, v &=& F(v)\,\,\,\,\,\,\,\,\,\,\,\,\,\,\, \,\,\,\,\,\,\,\,\,\,\,\,\,\,\,\,\,\,\textnormal{on}\,\, \mathbb{R}^{4+1}\\
		v_0 = v (0, x) & \textnormal{and}& v_1 = \ptl_T v (0, x) \,\,\,\,\,\,\,\, \textnormal{on}\,\, \mathbb{R}^4\\\end{array} 
		\right\}
		\end{equation}
		where $\leftexp{4+1}{\square}$ is the wave operator in the $4+1$ dimensional Minkowski space and  
  
		\begin{align}
		F(v) =& \left(e^{2Z} - 1 + \left(\frac{r}{R}\partial_{\eta} r + \frac{1}{2}\right) - \left(\frac{r}{R}\partial_{\xi} r - \frac{1}{2}\right)\right) \frac{v}{r^{2}} \notag\\
		&+ 2 \ptl_\xi v \ptl_\eta \log (\frac{r}{R}) + 2 \ptl_\eta v \ptl_\xi \log (\frac{r}{R})+ e^{2Z} \frac{R^{2}}{r^{2}} v^{3} \zeta(R v), 
		\end{align}
   $\zeta$ is a smooth function (see below).
		\item If we define $E_0 \fdg = \int_{\Sigma_0} \mathbf{T} (\mathcal{N}, \mathcal{N}) \,\bar{\mu}_{q_0} = \int_{\Sigma_0} \mathbf{e} \,\bar{\mu}_{q_0}$, the energy of $v$ satisfies the estimate
		\begin{align}
		E_0 \geq& \Vert u_0\Vert^2_{\dot{H}^1(\mathbb{R}^2)} + \Vert u_1 \Vert^2_{L^2(\mathbb{R}^2)} \\
		\geq& \Vert v_0\Vert^2_{\dot{H}^1(\mathbb{R}^4)} + \Vert v_1 \Vert^2_{L^2(\mathbb{R}^4)}
		\end{align}  
	\end{enumerate} 
 $\bar{\mu}_{q_0}$ is the volume form of $(\Sigma_0, q_0).$
	The estimate follows from the Hardy's inequality and assumptions on the target manifold (see \cite{BN_17}).
\end{remark}
Let us briefly remark on the derivation of the wave equation \eqref{Problem1}. In null coordinates, the covariant wave operator $\square_g$ can be expressed as: 

\begin{align}
    \square_g = -2 e^{-2Z} r^{-1} (\ptl_\eta (r \ptl_\xi ) + \ptl_\xi ( r \ptl_\eta )).
\end{align}

Now, using  $u=Rv$, in $(T, R, \theta)$ coordinates we can rewrite  $\square_g u $ in terms of the $4+1$ dimensional Minkowski wave operator for $v$ i.e., $\leftexp{4+1}{\square} v$.  In view of our assumption that the function $f$ in the metric of the target manifold, is a smooth, odd function such that $f(0)=1$ and $f_u (0)=1, $ we have 

\begin{align}
    f(u)f_u (u) = u + u^3 \zeta( u)
    \intertext{for some smooth function $\zeta$. Consequently,}
    \frac{f_u(u) f(u)}{r^2}= \frac{u}{r^2} + \frac{\zeta (u) u^3}{r^2} = \frac{R}{r^2} V + \frac{\zeta (Rv) R^3}{r^2} V^3.
\end{align}
This allows to derive the equation \eqref{Problem1}. We refer the reader to Proposition 1.1 in \cite{BN_17} for more details on this derivation. 

In the following, we shall define the notion of asymptotic flatness of the 2-metric $(\Sigma, q)$

\begin{definition}[Asymptotic flatness and AV-mass]
	Suppose $(\Sigma, q, k)$ is a Riemannian 2-metric, then it is called asymptotically flat if 
	\begin{enumerate}
		\item for some compact set $\Omega$, $\exists$ a global  diffeomorphism $\Sigma \setminus \Omega \to \mathbb{R}^2\setminus \mathbb{B}_1(0)$
		\item The $(\Sigma, q, k)$ admits the asymptotic expansion 
  
		\begin{align}
		q_{ab} =& \rho^{-m_{AV}} (\delta_{ab} + \mathcal{O}(\rho^{-1})) \label{2d-spatial-decay} \\
		k=&\mathcal{O}(\rho^{-2}),\quad tr(k) = \mathcal{O}(\rho^{m_{AV}-3}) \\
		\intertext{ and the energy momentum tensor}
		\mathbf{T}_{\mu \nu}=& \mathcal{O} (\rho^{m_{AV}-3}) 
		\end{align}
		in the asymptotic region, where $\delta_{ab}$ is the Euclidean metric and $\rho = \vert x \vert$. We define the quantity $m_{AV}$ as the Ashtekar-Varadarajan mass of the Cauchy hypersurface $(\Sigma, q, k)$ \cite{ash_var}.
	\end{enumerate}
\end{definition}
The concept of geometric mass for an asymptotically flat manifold in general relativity is based on the decay properties of the spatial metric relative to a reference (Eulidean) metric in the asymptotic region. In the 2+1 dimensional case, the geometric mass is defined in terms of the Eucliean 2-metric $\delta_{ab}$. $\rho = \vert x \vert $ is a radial function in the asymptotic region and the decay rate is measured in terms of $\rho$ and  this in turn allows us to read-off the geometric mass $m_{AV}.$

The spatial decay properties  (eq. \eqref{2d-spatial-decay})  of the 2+1 dimensional spatial metric $q$ should be contrasted with the decay properties of the metric $(\bar{q})$ in the 3+1 dimensional case: 

\begin{align}
    \bar{q}_{ij} = \left( 1+ \frac{{m_{\text{ADM}}}}{r} \right) \bar{\delta}_{ij} + \mathcal{O} (r^{-2})
\end{align}
where $r = \vert x \vert,$ in the $3$ dimensional asymptotic region and $m_{\text{ADM}}$ is the well-known ADM mass of the metric $q.$  

Recall the form of the 2-metric for our problem:

\begin{align}
q = e^{2Z} dR^2 + r^2 d\theta^2, 
\end{align}
then a calculation shows that $m_{AV}$ is given by 
\begin{align}
m_{AV} = 2(1-e^{-\tilde{Z}_\infty}), \quad (\Sigma, q)
\intertext{i.e., the spatial metric $q$ can be written as}
q= \rho^{-2 (1-e^{\tilde{Z}_\infty} )} (d \rho^2 +  \rho^2 d \theta^2 ) 
\end{align}
in the spatial asymptotic region, where 
\begin{align}
e^{2\tilde{Z}_\infty} \fdg = \lim_{R \to \infty} e^{2\tilde{Z}},  \quad e^{2\tilde{Z}} \fdg = e^{2Z} \left(\frac{\ptl r}{\ptl R}\right)^{-2} \quad \text{on}\quad  (\Sigma, q).
\end{align}

\subsection*{Review of Known Results}
To put our results into context let first us review  the results for the 2+1 equivariant Einstein-wave map system and the  the literature on critical wave maps  shall be  discussed later.  
For the initial data described in the previous section, we have the following global existence result from 
\cite{AGS_15}, which also holds without the smallness restriction of the initial data in view of a non-concentration of energy argument \cite{diss}.
\begin{theorem}[Global regularity of equivariant Einstein-wave maps]\label{thm:main-first}
	Let $E_0 < \eps^2$ for $\eps $ sufficiently small and let
	$(M, g, u)$ be the maximal Cauchy development of an asymptotically flat, compactly supported, regular Cauchy data set for the $2+1$ equivariant Einstein-wave map problem \eqref{eewm} with target $(N, h)$ satisfying
	\begin{equation} \label{eq:nospherecond-first} 
	\int_0^s f(s') ds' \to \infty \quad \text{for} \quad s \to \infty.
	\end{equation} 
	Then $(M, g, u)$ is regular and causally geodesically complete.
\end{theorem}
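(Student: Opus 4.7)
The plan is to execute a small-energy bootstrap on the coupled Einstein-wave map system, built around the reduction $u = Rv$ of Remark~\ref{rem:u-v}, which recasts the equivariant wave map equation as a 4+1 dimensional semilinear wave equation \eqref{Problem1} for $v$. Since Remark~\ref{rem:u-v} gives $\|v_0\|_{\dot{H}^1(\mathbb{R}^4)}^2 + \|v_1\|_{L^2(\mathbb{R}^4)}^2 \leq E_0 < \epsilon^2$, the 4+1 Cauchy data are automatically small, and the inhomogeneity $F(v)$ splits into (i) an energy-critical cubic self-interaction $e^{2Z}(R/r)^2 v^3 \zeta(Rv)$, and (ii) linear-in-$v$ corrections whose coefficients $e^{2Z}-1$, $(r/R)\ptl_\eta r + \tfrac12$, $(r/R)\ptl_\xi r - \tfrac12$, $\ptl \log(r/R)$ all vanish identically in Minkowski. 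Standard symmetric-hyperbolic local well-posedness in the $(T, R, \theta)$ chart produces a maximal smooth development on some interval $[0, T^*)$ respecting the axis conditions $r|_\Gamma = 0$, $\ptl_R r|_\Gamma = 1$; the task is to prove $T^* = \infty$ while simultaneously controlling the geometry.

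On $[0, T^*)$ I would bootstrap the smallness package
\begin{align*}
\|Z\|_{L^\infty} + \sup_R |r/R - 1| + \|(v, \ptl v)\|_{S} \leq C\epsilon,
\end{align*}
where $\|\cdot\|_S$ denotes an admissible Strichartz norm for the free 4+1 wave equation on $\dot{H}^1\times L^2$. Under this ansatz the linear coefficients in $F(v)$ are $O(\epsilon)$ and H\"older estimates give $\|v^3 \zeta(Rv)\|_{S'} \lesssim \|v\|_S^3$; pairing standard Strichartz estimates with the nonlinear Morawetz estimate for the fully coupled equivariant Einstein-wave map (advertised in the abstract as the key ingredient of this paper, extending the framework of \cite{BN_17}) yields
\begin{align*}
\|(v, \ptl v)\|_S \lesssim \epsilon + \epsilon\,\|(v, \ptl v)\|_S + \|(v, \ptl v)\|_S^3,
\end{align*}
which closes the wave map half of the bootstrap once $\epsilon$ is sufficiently small. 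The geometric half is recovered from the Einstein equations themselves: $\mathbf{E}_{\xi\xi} = \mathbf{T}_{\xi\xi}$ and $\mathbf{E}_{\eta\eta} = \mathbf{T}_{\eta\eta}$ read as Raychaudhuri-type transport equations for $r$ along outgoing and incoming null rays with sources $(\ptl_\xi u)^2$ and $(\ptl_\eta u)^2$, while $\mathbf{E}_{\xi\eta} = \mathbf{T}_{\xi\eta}$ supplies a wave equation for $Z$. Combined with the monotonicity of the mass aspect $m = 1 + 4 e^{-2Z} \ptl_\xi r\, \ptl_\eta r$ along null rays and with compact support of the initial data, integrating these equations propagates $\|Z\|_{L^\infty}$ and $\|r/R - 1\|_{L^\infty}$ at the $O(\epsilon)$ level.

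The hard part is unquestionably the wave map step: the cubic term is energy-critical in 4+1 dimensions, so pure Strichartz is borderline and a genuinely nonlinear Morawetz inequality adapted to the dynamical variable coefficients is indispensable. Constructing such an inequality with the right sign under mere smallness of the energy, and uniformly in the metric perturbation produced by the bootstrap hypothesis, is the technical crux of the argument. Once the bootstrap closes, $T^* = \infty$ and global regularity follows; causal geodesic completeness is then obtained by integrating the geodesic equation in $(T, R, \theta)$, where the uniform pointwise smallness of $Z$ and of $r/R - 1$ renders the Christoffel symbols a uniform perturbation of the Minkowski ones outside a neighbourhood of $\Gamma$, while the regularity conditions $r|_\Gamma = 0$, $\ptl_R r|_\Gamma = 1$ together with the absence of trapped surfaces for small energy preclude any causal geodesic from reaching $\Gamma$ or becoming incomplete in finite affine parameter.
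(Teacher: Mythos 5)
This theorem is not proved in the paper; it is stated as background and attributed to \cite{AGS_15}, with the large-data extension via a non-concentration-of-energy argument credited to \cite{diss}. The proof in \cite{AGS_15} is organized around light-cone energy identities, monotonicity of the mass-aspect function $m = 1 + 4e^{-2Z}\ptl_\xi r\,\ptl_\eta r$, the no-trapped-surfaces theorem (reproduced in the present paper immediately after the statement you are proving), and a Christodoulou--Tahvildar-Zadeh-style non-concentration argument for $u$ at the vertex of a backward light cone, followed by pointwise bounds via the $4+1$ dimensional fundamental solution. That is a different route from the Strichartz/Morawetz bootstrap you propose.

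Beyond the difference in route, there is a genuine circularity in the step you yourself call the ``technical crux''. The nonlinear Morawetz inequality you propose to pair with Strichartz is Theorem~\ref{t1.1} of this paper, whose hypothesis is precisely that $(M, g, U)$ is a \emph{globally regular} solution: its proof integrates the multiplier identity over all of $\mathbb{R}^{4+1}$ and uses the global smallness $|Z|, |r/R-1| \leq \epsilon(E(v))$, global energy flux bounds on outgoing null cones, and the axis behaviour $\ptl_\eta r|_{R=0} = 1/2$ for all $T$. You cannot invoke it to establish the regularity it assumes. To make your bootstrap close you would need a finite-slab version of the Morawetz estimate whose constants are quantified purely in terms of the bootstrap hypotheses on $[0,T^*)$; you flag this as the crux but do not supply it, and it is not in this paper either, since the paper's Morawetz estimate is constructed to prove \emph{scattering} after global regularity is already in hand. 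The geometric half of your scheme (null transport for $r$, wave equation for $Z$, mass-aspect monotonicity, smallness of Christoffel symbols giving geodesic completeness away from $\Gamma$, and no trapped surfaces near $\Gamma$) is consistent in spirit with \cite{AGS_15}, but it is moot until the wave-map step is made bootstrap-admissible.
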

As a part of this result, we established that the energy controls the metric components of the $2+1$ dimensional spacetime (see for example Corollary 5.10 and Corollary 5.11 in \cite{AGS_15}): 

\begin{align}
    \vert e^{2Z} -1 \vert, \quad   \Big \vert \frac{R}{r} -1 \Big \vert, \quad  \vert u \vert \leq E_0.
\end{align}

The results in \cite{diss_13, AGS_15, BN_17} and the current work are part of the larger program to understand the global behaviour of the $3+1$ Einstein equations with one translational $U(1)$ symmetry. 
In global existence proof of \cite{AGS_15}, it was proved that, in future development of the initial data of the 2+1 equivariant Einstein-wave map system, the trapped surfaces cannot form. In particular, it was proved that 

\begin{theorem} [No trapped surfaces; Theorem 3.1 in \cite{AGS_15}]
	Suppose $(M, g, U)$ is the regular, globally hyperbolic maximal development of the 2+1 equivariant Einstein-wave map system, define $\mathcal{Q} \fdg = M \setminus U(1)$ and 
	\begin{align}
	\mathcal{R} \fdg = \{ p \in \mathcal{Q} \quad \textnormal{such that} \quad \ptl_\xi r >0, \ptl_\eta r <0 \}
	\end{align}
	then, 
	\begin{enumerate}
		\item $\mathcal{Q} = \mathcal{R}$
		\item $q \in \mathcal{Q},$
		\begin{align}
		0 \leq m(q) < m_{\infty} <1
		\end{align}
		where $m_{\infty}$ is the spatial asymptotic limit of $m$ on $(\Sigma, q, k).$
	\end{enumerate}
\end{theorem}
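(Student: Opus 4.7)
The plan is to exploit the structure of the $2+1$ Einstein equations in the double null coordinates, where the pure null components $\mathbf{E}_{\xi\xi}$ and $\mathbf{E}_{\eta\eta}$ reduce to Raychaudhuri-type identities for the optical quantities $\ptl_\xi r$ and $\ptl_\eta r$. Combined with the regular axis boundary data on $\Gamma$, these will force the signs of the null derivatives of $r$ throughout the quotient $\mathcal{Q}$, yielding part (1). Part (2) is then established by applying analogous monotonicity to the mass aspect $m$ via the mixed component $\mathbf{E}_{\xi\eta} = \mathbf{T}_{\xi\eta}$, combined with the axis regularity and the sign structure on $\mathcal{R}$.

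\textbf{Proof of $\mathcal{Q} = \mathcal{R}$.} Substituting the explicit formulae for $\mathbf{E}_{\xi\xi}$, $\mathbf{E}_{\eta\eta}$ and $\mathbf{T}_{\xi\xi}$, $\mathbf{T}_{\eta\eta}$ into the constraints and rearranging yields the null-Raychaudhuri identities
\begin{align}
\ptl_\xi(e^{-2Z}\ptl_\xi r) &= -r\, e^{-2Z}(\ptl_\xi u)^2, \\
\ptl_\eta(e^{-2Z}\ptl_\eta r) &= -r\, e^{-2Z}(\ptl_\eta u)^2.
\end{align}
On $\mathcal{Q}$, where $r \geq 0$, both right-hand sides are non-positive, so $e^{-2Z}\ptl_\xi r$ is monotone along outgoing null rays (constant-$\eta$ slices) and $e^{-2Z}\ptl_\eta r$ is monotone along incoming null rays (constant-$\xi$ slices). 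The regular axis boundary conditions $\ptl_\xi r|_\Gamma = -\halb$ and $\ptl_\eta r|_\Gamma = \halb$ fix the initial signs. Integrating each monotone quantity from its value on $\Gamma$ along null geodesics preserves those signs (with the orientation convention adopted for $\mathcal{R}$), so that the characterization of $\mathcal{R}$ holds pointwise throughout the development. Hence $\mathcal{Q} \subseteq \mathcal{R}$, and the reverse inclusion is built into the definition.

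\textbf{Proof of the mass-aspect bound.} From $\mathbf{E}_{\xi\eta} = \mathbf{T}_{\xi\eta}$ one extracts $\ptl_\xi\ptl_\eta r = \tfrac{1}{4}e^{2Z} f^2(u)/r$. Differentiating $m = 1 + 4 e^{-2Z}\ptl_\xi r\, \ptl_\eta r$ and substituting both the Raychaudhuri identities and this expression produces
\begin{align}
\ptl_\xi m &= -4r\, e^{-2Z}(\ptl_\xi u)^2\, \ptl_\eta r + \ptl_\xi r \cdot \frac{f^2(u)}{r}, \\
\ptl_\eta m &= -4r\, e^{-2Z}(\ptl_\eta u)^2\, \ptl_\xi r + \ptl_\eta r \cdot \frac{f^2(u)}{r}.
\end{align}
On $\mathcal{R}$, the opposite signs of $\ptl_\xi r$ and $\ptl_\eta r$ ensure that the spatial derivative $\ptl_R m = \ptl_\xi m - \ptl_\eta m$ has a definite sign, so $m$ is monotone in $R$. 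Evaluating at $\Gamma$ using the axis regularity condition $Z|_\Gamma = 0$ gives $m|_\Gamma = 1 - e^{-2Z_\Gamma} = 0$, hence $m \geq 0$ everywhere on $\mathcal{Q}$. The upper bound $m < 1$ is immediate on $\mathcal{R}$: the product $\ptl_\xi r\, \ptl_\eta r$ is strictly negative, forcing the correction term in $m$ to be strictly negative. The monotonicity then upgrades this to $m(q) < m_\infty$, where $m_\infty$ exists as the spatial limit by the assumed asymptotic flatness of $(\Sigma, q, k)$.

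\textbf{Main obstacle.} The most delicate points lie in the strict inequalities. Upgrading monotonicity to the strict bound $m(q) < m_\infty$ demands ruling out the degenerate possibility that $\ptl_\xi u$, $\ptl_\eta u$, and $f(u)$ vanish simultaneously along an entire null segment connecting $q$ to spatial infinity; this must be handled by a unique-continuation or propagation-of-vacuum argument leveraging the coupled wave map equation \eqref{eq:ewm}. The second subtlety is identifying the monotone limit $m_\infty$ with the Ashtekar--Varadarajan quantity $2(1 - e^{-\tilde{Z}_\infty})$ and confirming $m_\infty < 1$, which reduces to establishing the strict positivity $e^{-\tilde{Z}_\infty} > 0$ from the asymptotic decay assumptions on $q$, $k$, and $\mathbf{T}_{\mu\nu}$.
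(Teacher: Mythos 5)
The paper does not actually prove this statement; it cites it as Theorem~3.1 of \cite{AGS_15}, so there is no in-paper proof to compare against. Your ingredients (the null Raychaudhuri identities for $e^{-2Z}\ptl_\xi r$ and $e^{-2Z}\ptl_\eta r$, the sign-definiteness of $\ptl_\xi\ptl_\eta r$ from $\mathbf{E}_{\xi\eta}=\mathbf{T}_{\xi\eta}$, and the derivative formulas for the mass aspect $m$) are all correct and are indeed the right objects. However, the logical step from these ingredients to part~(1) has a genuine gap.

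The problem is the sentence ``Integrating each monotone quantity from its value on $\Gamma$ along null geodesics preserves those signs.'' It does not. Fix $q=(\xi_0,\eta_0)$ with $\xi_0>\eta_0$ and follow the outgoing ray $\eta=\eta_0$ from the axis point $(\eta_0,\eta_0)$ to $q$: the Raychaudhuri inequality $\ptl_\xi(e^{-2Z}\ptl_\xi r)\le 0$ says $e^{-2Z}\ptl_\xi r$ is \emph{non-increasing} along this ray, so the axis value gives an \emph{upper} bound at $q$, not a lower bound, and positivity of $\ptl_\xi r$ at $q$ does not follow. The analogous integration for $\ptl_\eta r$ along $\xi=\xi_0$ (which reaches $\Gamma$ at $\eta=\xi_0>\eta_0$) gives only a \emph{lower} bound, so negativity at $q$ also does not follow. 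In other words, the Raychaudhuri monotonicity runs in exactly the direction that would \emph{permit} $\ptl_\xi r$ to hit zero and trapped surfaces to form; that it does not is the whole content of the theorem and cannot be obtained by integrating a one-sided inequality from $\Gamma$. The mixed constraint $\ptl_\xi\ptl_\eta r\ge 0$ has the same defect (the bounds it produces go the wrong way). What is actually needed is a continuity/bootstrap argument in which the mass-aspect bound $m<1$ (which forces $\ptl_\xi r\,\ptl_\eta r<0$ and hence nonvanishing of each factor) is established on a region where $\mathcal Q=\mathcal R$ is already known, and is then used together with openness and closedness to extend that region; parts~(1) and~(2) must be proved simultaneously, not sequentially. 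Your ``Main obstacle'' paragraph identifies the strict inequalities and the identification of $m_\infty$ as the delicate points, but the more serious missing idea is this continuity structure for part~(1).

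A secondary issue: you adopt the paper's stated axis data $\ptl_\xi r|_\Gamma=-\halb$, $\ptl_\eta r|_\Gamma=\halb$ and try to reconcile them with $\mathcal R=\{\ptl_\xi r>0,\ \ptl_\eta r<0\}$ by an aside about ``orientation convention.'' Those stated values are inconsistent both with $\mathcal R$ and with $\ptl_R r|_\Gamma=1$ (since $\ptl_R=\ptl_\xi-\ptl_\eta$); the correct conditions must be $\ptl_\xi r|_\Gamma=\halb$, $\ptl_\eta r|_\Gamma=-\halb$, and your proof should use these explicitly rather than paper over the sign discrepancy. With the corrected signs, your mass-aspect computation on $\mathcal R$ (yielding $\ptl_\xi m\ge 0$, $\ptl_\eta m\le 0$, hence $\ptl_R m\ge 0$, and $m|_\Gamma=1-e^{-2Z}|_\Gamma=0$) is correct, but, as noted, it presupposes that one is already on $\mathcal R$.
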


Consider the system of coupled nonlinear wave equations occurring in the 2+1 Einstein-equivariant wave map system:

\begin{subequations}
	\begin{align}
	-\ptl^2_T Z + \ptl^2_R Z =& - e^{2Z} \frac{f^2(u)}{8r^2} - 
	\frac{1}{8}\big(-(\ptl_T u)^2 + (\ptl_R u)^2\big) \label{eq:waveZ1}\\
	-\ptl^2_T r + \ptl^2_R r=& r \kappa (\ptl_T u + \ptl_R u)^2\label{eq:waver}\\
	\square_{g(u)} u=&\frac{f(u) f_u (u)}{r^2}.
	\label{eq:waveuTR}
	\end{align}
\end{subequations}  
In a previous work, we have established scattering for the following partially coupled Einstein-wave map equations, classified as Problems I and II below.
\subsection*{Problem I}
Consider a function $v$ such that
\begin{equation}\label{Wave1}
\left. \begin{array}{rcl}
\leftexp{4+1}{\square}\, v &=& F(v)\,\,\,\,\,\,\,\,\,\,\,\,\,\,\, \,\,\,\,\,\,\,\,\,\,\,\,\,\,\,\,\,\,\textnormal{on}\,\, \mathbb{R}^{4+1}\\
v_0 = v (0, x) & \textnormal{and}& v_1 = \ptl_T v (0, x) \,\,\,\,\,\,\,\, \textnormal{on}\,\, \mathbb{R}^4\\\end{array} 
\right\}
\end{equation}
with 
\[F(v) = \left(e^{2Z} - 1 + \left(\frac{r}{R}\partial_{\eta} r + \frac{1}{2}\right) - \left(\frac{r}{R}\partial_{\xi} r - \frac{1}{2}\right)\right) \frac{v}{r^{2}} + e^{2Z} \frac{R^{2}}{r^{2}} v^{3} \zeta(R v), \]
where $Z$ and $r$ satisfy the equations

 \begin{align}
     -4 \ptl_\xi \ptl_\eta Z =& \frac{1}{2} \left(4 \ptl_\eta \ptl_\eta u + e^{2Z} \frac{f^2 (u)}{r^2} \right) \\
r^{-1} \ptl_\eta \ptl_ \eta r =& \frac{e^{2Z}}{4} \frac{f^2(u)}{r^2}, \quad u = Rv.
 \end{align}
 In comparison with the equation \eqref{Problem1}, it may be noted that the nonlinearity in equation \eqref{Wave1} in Problem I does not contain the $\displaystyle 2 \ptl_\eta v \ptl_\eta \log \frac{r}{R} + 2 \ptl_\eta v \ptl_\xi  \log \frac{r}{R}$ term. 

\subsection*{Problem II}
Suppose the function $v$ is such that 
\begin{equation} \label{Wave2}
\left. \begin{array}{rcl}
\leftexp{4+1}{\square}\, v &=& F(v)\,\,\,\,\,\,\,\,\,\,\,\,\,\,\, \,\,\,\,\,\,\,\,\,\,\,\,\,\,\,\,\,\,\textnormal{on}\,\, \mathbb{R}^{4+1}\\
v_0 = v (0, x) & \textnormal{and}& v_1 = \ptl_T v (0, x) \,\,\,\,\,\,\,\, \textnormal{on}\,\, \mathbb{R}^4\\\end{array} 
\right\}
\end{equation}
where 
\begin{align}
F(v) =& \left( \frac{1}{r} \ptl_\eta r + \frac{1}{2R} \right)\ptl_\xi v  + \left( \frac{1}{r} \ptl_\xi r - \frac{1}{2R} \right)\ptl_\eta v \notag\\
&\quad + \left( \left(\frac{r}{R} \ptl_\eta r + \halb \right) - \left(\frac{r}{R} \ptl_\xi r - \halb \right) \right) \frac{v}{r^2} \notag\\
&\quad + \frac{R^2}{r^2} v^3 \zeta(R v)
\end{align}
and $r$ satisfies the equation 

\begin{align}
   r^{-1} \ptl_\eta \ptl_ \eta r =& \frac{e^{2Z}}{4} \frac{f^2(u)}{r^2}, \quad u= Rv.
 \end{align}
 It may be noted that the equation \eqref{Wave2} is a special case of the equation \eqref{Problem1}, wherein we restricted to the special case $Z \equiv 0.$
We were able to prove scattering for both Problems I and II:
\begin{theorem}[Theorem 2.11 and Theorem 3.2 in \cite{BN_17}]
	Suppose $v$ is a global solution of the wave equation \eqref{Wave1} or  \eqref{Wave2}, with the
	energy $E_0  < \eps^2$, then $v$ scatters forward and backward in time
\end{theorem}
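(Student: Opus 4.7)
The plan is to prove scattering for $v$ by combining the smallness of the initial energy $E_0 < \eps^2$ from Remark \ref{rem:u-v} with a nonlinear Morawetz estimate and Strichartz estimates for the free wave equation on $\mathbb{R}^{4+1}$. Under the substitution $u = R v$, the equivariant wave map becomes a semilinear $4+1$ wave equation whose principal nonlinearity $e^{2Z}\frac{R^2}{r^2} v^3 \zeta(R v)$ is energy-critical, while the remaining terms in $F(v)$ carry the factors $e^{2Z}-1$, $\frac{r}{R}\ptl_\eta r + \halb$, and $\frac{r}{R}\ptl_\xi r - \halb$, each of which vanishes on the axis $\Gamma$ and decays at spatial infinity by asymptotic flatness. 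The task therefore reduces to a critical semilinear wave problem in which the deviation of the coefficients from the flat case is controlled by $\eps$ through the geometric estimates available from \cite{AGS_15}.

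First I would use the monotonicity of the mass aspect function $m$ and the no-trapped-surfaces theorem to obtain quantitative smallness of $e^{2Z}-1$ and $\frac{r}{R}-1$ throughout $M$, which, together with the divergence identity $\grad^\mu \mathbf{T}_{\mu \nu} = 0$, yields the uniform-in-time bound $\Vert \grad v(t)\Vert_{L^2(\mathbb{R}^4)}^2 + \Vert \ptl_T v(t)\Vert_{L^2(\mathbb{R}^4)}^2 \lesssim E_0$. Next I would derive a nonlinear Morawetz estimate by contracting the stress-energy of the $4+1$ problem with the conformal Morawetz vector field $X = \ptl_R + \frac{3}{2R}$ and integrating over a spacetime slab. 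After the standard cancellations in even spatial dimension this produces an inequality of the schematic form
\begin{align*}
\int_0^T \!\!\int_{\mathbb{R}^4} \frac{|v|^4}{R} \,dx\,dt \lesssim E_0 + \mathcal{E}(v),
\end{align*}
where $\mathcal{E}(v)$ collects commutator terms generated by the deviation of the metric from Minkowski and by the lower order contributions of $F$ (in Problem II, also the first-order transport terms). The crucial point is that every term in $\mathcal{E}(v)$ carries a coefficient that is $\mathcal{O}(\eps)$ by virtue of the geometric smallness and hence is absorbed into the left-hand side, yielding a genuine spacetime bound uniform in $T$.

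Finally I would close the scattering argument via Strichartz estimates and Duhamel's formula,
\begin{align*}
v(t) = S(t)(v_0, v_1) + \int_0^t S(t-s) F(v(s))\,ds,
\end{align*}
where $S$ is the free $4+1$ wave propagator. The Morawetz bound, interpolated with the conserved energy, upgrades to finiteness of a critical Strichartz-type scattering norm; a standard bootstrap then shows that $S(-t)(v(t), \ptl_T v(t))$ is Cauchy in $\dot H^1 \times L^2(\mathbb{R}^4)$ as $t \to \pm\infty$, and scattering to a free wave follows. The main obstacle, I expect, is the interaction of the energy-critical cubic term with the first-order derivative perturbations appearing in Problem II: these are not obviously subcritical, and controlling them requires the sharp pointwise decay of $\ptl_R r - 1$, $\ptl_\xi r - \halb$, and $\ptl_\eta r + \halb$ inherited from the Einstein constraint equations, together with the equivariant structure to eliminate angular derivatives. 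Once these error terms are absorbed into the Morawetz inequality, the scattering conclusion is standard.
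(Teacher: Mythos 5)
Your overall strategy --- conserved energy, a nonlinear Morawetz-type estimate, absorption of error terms by $\eps$-smallness of the metric quantities, and closure via Duhamel and Strichartz --- is the same as the one in \cite{BN_17} and reproduced in this paper for the fully coupled case (Theorem \ref{t1.1}). Two points differ, one cosmetic and one substantive.

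On the cosmetic side: the Morawetz quantity actually used is the weighted $L^2$-type bound $\int_{\mathbb{R}^{4+1}} v^2 R^{-3}\,\bar\mu_{\check g}$, obtained from the multiplier $\mathfrak{X}=\tfrac13\ptl_R$ together with the lower-order current built from $\kappa = 1/R$, rather than the potential-type quantity $\int |v|^4 R^{-1}$ you write. The two are of course related via the radial Sobolev bound $R|v|\lesssim\eps$, but it is the $v^2 R^{-3}$ bound that serves as the absorbing norm throughout.

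The substantive gap is in the claim that ``every term in $\mathcal{E}(v)$ carries an $\mathcal{O}(\eps)$ coefficient and hence is absorbed into the left-hand side.'' This is false for the terms involving the metric coefficient $Z$. Consider the representative bulk term
\begin{equation*}
\int\!\!\int (e^{2Z}-1)\,\frac{v}{R^2}\,\ptl_R v\,\bar\mu_{\check g}
\;\approx\; 2\int\!\!\int Z\, v\,\ptl_R v\, R\,dR\,dT .
\end{equation*}
Estimating $|Z|\le\eps$ and Cauchy--Schwarz gives at best
\begin{equation*}
\eps\left(\int\!\!\int v^2\,dR\,dT\right)^{1/2}\left(\int\!\!\int (\ptl_R v)^2 R^2\,dR\,dT\right)^{1/2},
\end{equation*}
and the second factor is a time-integrated energy, which is not finite and in particular is not controlled by the Morawetz quantity. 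So $\eps$-smallness of $Z$ alone does not close this term, and nothing in your outline supplies the missing ingredient. The paper resolves this by exploiting the nonlocal structure: $Z$ solves a $1+1$ wave equation with source given by $v$-quadratic terms, so one has a Duhamel representation (Lemma \ref{Z-rep}) that allows the decomposition $Z = Z_1 + Z_2$ where $Z_1$ has good spacetime integrability in $v$ and $\ptl_R Z_2$ does. One then integrates $v\,\ptl_R v = \tfrac12\ptl_R(v^2)$ by parts, shifts the derivative onto $Z_2$ (and $R$), and bounds all resulting integrals by $E(v)\int\!\!\int v^2\,dR\,dT$ using flux/energy bounds on null lines and the Hardy inequality. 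A similar device --- using $|\ptl_{\xi\eta}^2 r|\lesssim\eps/r$ and the axis conditions $\ptl_\eta r|_\Gamma=\tfrac12$, $\ptl_\xi r|_\Gamma=-\tfrac12$ --- is needed for the coefficients $\ptl_\eta r+\tfrac12$ and $\ptl_\xi r-\tfrac12$, and these too cannot be handled by pointwise smallness alone: one needs the quantitative bound $\sup_\xi\frac{1}{|\xi-\eta|}|\ptl_\eta r-\tfrac12|\le\int|v|^2\,d\eta$ from integrating the constraint. Without the representation formula for $Z$ and the associated $Z_1+Z_2$ splitting and integration by parts, the Morawetz inequality does not close and your bootstrap has a gap.
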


In our context, by scattering we mean that the solution to the given nonlinear equation converges to a solution of a linear equation in the energy topology, as $T \to \infty.$
The coupling of the wave maps equation with the Einstein equations introduces nonlocality into the problem, which causes significant difficulties in closing the estimates at critical regularity. 

It may be noted that the equations \eqref{Wave1} and \eqref{Wave2} in Problem I and II respectively, are special cases of the fully coupled nonlinear equivariant Einstein wave maps equation \eqref{Problem1}.


The methods employed in the wave maps equation are based on a optimal control of the Morawetz and Strichartz estimates that allow  us to control the bootsrap estimates at critical regularity. 
In this work we are able to prove scattering for the fully coupled wave maps equation:

\begin{equation} \label{wave-full}
\left. \begin{array}{rcl}
\leftexp{4+1}{\square}\, v &=& F(v)\,\,\,\,\,\,\,\,\,\,\,\,\,\,\, \,\,\,\,\,\,\,\,\,\,\,\,\,\,\,\,\,\,\textnormal{on}\,\, \mathbb{R}^{4+1}\\
v_0 = v (0, x) & \textnormal{and}& v_1 = \ptl_T v (0, x) \,\,\,\,\,\,\,\, \textnormal{on}\,\, \mathbb{R}^4\\\end{array} 
\right\}
\end{equation}

\begin{align}
F(v)=& \left( e^{2Z}-1 + (\frac{r}{R} \ptl_\eta r + \halb)
- (\frac{r}{R} \ptl_\xi r - \halb)  \right) \frac{v}{r^2}\notag\\ &+ 2 \ptl_\xi v \ptl_\eta \log \frac{r}{R} + 2 \ptl_\eta v \ptl_\xi \log (\frac{r}{R}) + e^{2Z} \frac{R^2}{r^2} v^3\zeta (Rv)
\end{align}

\begin{theorem}\label{scattering-full}
	Suppose $E_0 <\eps^2,$ for $\eps$ sufficiently small and $(M, g, U)$ is a global solution of the 2+1 equivariant Einstein-wave map system, then $v$ scatters backward and forward in time in the energy topology i.e., 
 $$\Vert v - v_S \Vert \to 0 $$ in the energy topology, as $T \to 0$,
 where $v_S$ is a solution of the linear equation $\leftexp{4+1}{\square} v_S =0.$
\end{theorem}

We would like to remind the reader that we are able to control the 
metric components in terms of energy f.g., 

$$\Big \vert \frac{R}{r} -1 \Big \vert, \quad  Z \leq E_0$$
so we need not make assumptions on the smallness of the metric components. 

In the proof of the above scattering theorem, we use the following nonlinear Morawetz estimate:
\begin{theorem}[Nonlinear Morawetz estimate]
	Suppose $(M, g, U)$ is a globally regular solution of the 2+1 Einstein-equivariant wave map system, then 
	\begin{align} \label{non-Mora}
	\int_{\mathbb{R}^{4+1}} \frac{v^2}{R^3} \bar{\mu}_{\check{g}} \leq \Vert v_0 \Vert^2_{\dot{H}^1 (\mathbb{R}^4)} + \Vert v_1 \Vert^2_{L^2(\mathbb{R}^4)}.
	\end{align}
	where $\bar{\mu}_{\check{g}}$ is the volume form of the $4+1$ Minkowski space $(\mathbb{R}^{4+1}, \check{g}),$ with
	\begin{align}
	\check{g} = -dT^2 + dR^2+ R^2 d\omega_{\mathbb{S}^3}.
	\end{align}
\end{theorem}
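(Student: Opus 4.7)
My plan is to prove the estimate by the Morawetz multiplier method adapted to the $4+1$ wave equation for $v = u/R$, with the nonlinearity $F(v)$ handled by the $\varepsilon$-smallness of $E_0$ together with the geometric bounds from the no-trapped-surfaces theorem, which guarantee $r/R \sim 1$ and keep $Z$ and $r-R$ uniformly small throughout the globally hyperbolic development.

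First, I would multiply $\square_{\check g} v = F(v)$ by the canonical Morawetz multiplier $Xv := \partial_R v + \tfrac{3}{2R} v$ for the $4+1$ wave equation (the weight $3/2$ being $(n-1)/2$ with $n=4$), and integrate by parts over a time slab $[0,T] \times \mathbb{R}^4$. Using the spherical symmetry of $v$ (which depends only on $(T,R)$ after the equivariant reduction), the linear part of the identity produces the desired positive bulk term $\tfrac{3}{4} \int\!\!\int v^2/R^3\, \bar\mu_{\check g}$ together with spacetime boundary terms on $\{T=0\}$ and $\{T=T_0\}$ controlled by the linear free energy $\|v_0\|_{\dot H^1(\mathbb{R}^4)}^2 + \|v_1\|_{L^2(\mathbb{R}^4)}^2$ via Remark \ref{rem:u-v}(2) and conservation of the geometric energy.

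Next, I would split $F(v) = F_1 + F_2 + F_3$ corresponding to the potential-type term $\alpha(R,T)v/r^2$ with $\alpha := e^{2Z} - 1 + ((r/R)\partial_\eta r + \tfrac12) - ((r/R)\partial_\xi r - \tfrac12)$, the derivative coupling $2\partial_\xi v \,\partial_\eta \log(r/R) + 2\partial_\eta v\, \partial_\xi \log(r/R)$, and the defocusing cubic $e^{2Z}(R^2/r^2) v^3 \zeta(Rv)$. For $F_3$, integration by parts against $Xv$ moves a derivative onto $v^4 \zeta(Rv)$, producing a nonnegative contribution modulo error terms driven by $\partial_R Z$ and $\partial_R(R^2/r^2)$, which are small under the smallness hypothesis. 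For $F_1$, the no-trapped-surfaces theorem forces $r \sim R$ and hence $\|\alpha\|_{L^\infty_{T,R}} = O(\varepsilon)$, so the contribution is absorbed into the main LHS through Hardy's inequality in $\mathbb{R}^4$. The derivative coupling $F_2$ is the delicate piece: $\partial \log(r/R)$ has no definite sign, but the Raychaudhuri-type equation \eqref{eq:waver} and the smallness of the wave-map energy force $\partial_\xi \log(r/R)$ and $\partial_\eta \log(r/R)$ to lie in suitable spacetime $L^2$-type norms with $O(\varepsilon)$ bounds; integrating by parts once to redistribute derivatives and pairing with Strichartz estimates on $\partial v$ inherited from \cite{BN_17} then closes this term.

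The main obstacle will be closing the estimate at critical regularity: because $Z$ and $\log(r/R)$ are nonlocal functionals of $v$ through the Einstein constraints and the coupled wave equations \eqref{eq:waveZ1}--\eqref{eq:waver}, the multiplier identity is not a purely linear computation but must be read as part of a bootstrap with the Strichartz estimates developed for Problems I and II. Concretely, I expect the bulk of the work to lie in verifying that the geometric spacetime norms controlling $\partial \log(r/R)$ and $Z$ are small enough that the nonlinear contribution $|\int F(v) \cdot Xv|$ is bounded by a fraction of the main positive term $\tfrac{3}{4}\int v^2/R^3\,\bar\mu_{\check g}$; this absorption, together with the linear boundary identity, yields the stated clean bound with the free energy on the right-hand side, justifying the designation \emph{nonlinear Morawetz estimate}.
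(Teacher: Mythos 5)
Your multiplier setup is essentially the paper's: the paper takes $J = J_{\mathfrak{X}} + J_\kappa$ with $\mathfrak{F}(R)=\tfrac13$ and $\kappa = \tfrac1R$, which after normalizing is the same object as your $Xv = \partial_R v + \tfrac{3}{2R} v$ (both produce the positive bulk term $c\,v^2/R^3$ from $-\square\kappa\cdot v^2/2$, and both yield boundary fluxes bounded by $\|v_0\|^2_{\dot H^1}+\|v_1\|^2_{L^2}$ via Hardy and the dominant energy condition). The resulting identity to close is \[ \int \frac{v^2}{R^3}\,\bar\mu_{\check g}\;\lesssim\; E(v)\;+\;\Big|\!\int F(v)\,\tfrac{v}{R}\,\bar\mu_{\check g}\Big|\;+\;\Big|\!\int F(v)\,\partial_R v\,\bar\mu_{\check g}\Big|, \] which is where your proposal and the paper part ways.

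Where you have a genuine gap is in how the nonlinear integrals are actually absorbed. You propose to close the derivative-coupling term $F_2$ and the $(e^{2Z}-1)$ contribution ``via Strichartz estimates on $\partial v$ inherited from \cite{BN_17}'' together with a bootstrap, but the paper's proof of this theorem does not invoke Strichartz at all. Instead it uses two concrete devices you do not mention. First, for the derivative-coupling terms it exploits the constraint structure: from $\partial_\xi\partial_\eta r = O(R\,|v|^2)$ and the axis condition $\partial_\eta r\vert_{R=0}=\tfrac12$, the fundamental theorem of calculus gives the pointwise bound \[ \sup_\xi \frac{1}{|\xi-\eta|}\Big|\partial_\eta r - \tfrac12\Big| \;\le\; \int |v(\xi,\eta)|^2\,d\eta, \] which, paired with the energy flux $\int |\xi-\eta|^3\,|\partial_\xi v|^2\,d\xi \le E(v)$ on outgoing characteristics, kills that term without any Strichartz input. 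Second — and this is the crux you leave unaddressed — the term $\int (e^{2Z}-1)\,\tfrac{v}{R^2}\,\partial_R v$ is handled by the explicit Duhamel representation of $Z$ (Lemma \ref{Z-rep}), which writes $Z$ as an iterated null integral of $v^2$ and $F(v)v$ and permits a splitting $Z=Z_1+Z_2$ where $Z_1$ has good integral properties and $\partial_R Z_2$ has good integral properties; the absorption then proceeds by integrating by parts against whichever piece is controllable. Saying that ``the geometric spacetime norms controlling $Z$ are small enough'' is a statement of the desired conclusion, not an argument: the whole difficulty is that $Z$ is a nonlocal quadratic functional of $v$, and without the representation formula one has no way to redistribute derivatives and pair it against the flux and Hardy bounds at critical regularity. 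Finally, your expectation that the cubic $F_3$ yields a ``nonnegative contribution modulo error terms'' by defocusing is not used in the paper either; the cubic is expanded in $\zeta(Rv)=\sum c_j(Rv)^j$ via the radial Sobolev embedding $R|v|\lesssim\eps$, integrated by parts, and absorbed purely by smallness, with the $Z$-dependent pieces again treated through the $Z_1+Z_2$ splitting. In short, the skeleton matches, but the load-bearing technical step — the $Z$-representation lemma and the pointwise characteristic bounds on $\partial_\eta r - \tfrac12$, $\partial_\xi r + \tfrac12$ — is missing from your proposal.
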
  

Subsequently, using the scattering result for $v$, we obtain scattering for the fully coupled equivarant wave maps: 

\begin{theorem} [Temporal Scattering] \label{eewm-scatter}
	Let $E_0 < \eps^2$ for $\eps $ sufficiently small and let
	$(M, g, u)$ be the maximal geodesically complete Cauchy development of an asymptotically flat, compactly supported, regular Cauchy data set for the $2+1$ equivariant Einstein-wave map problem \eqref{eewm} with target $(N, h)$ satisfying
	\begin{equation} 
	\int_0^s f(s') ds' \to \infty \quad \text{when }  s \to \infty.
	\end{equation} 
	Then the maximal development $(M, g, U)$ is such that $U$ converges to a solution of the linearized (around the Minkowski space)  wave maps equation  \eqref{eq:ewm}  as $T \to \infty$ along future pointing the timelike curves. 
\end{theorem}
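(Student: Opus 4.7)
The plan is to transfer the energy-scattering result for $v$ from Theorem \ref{scattering-full} back to the wave map $U=(u, k\theta)$ via the substitution $u = Rv$. Since the equivariant ansatz reduces the full wave maps system to the scalar equation for $u$, and since $u = Rv$ converts the critical 2+1 equation into the $4+1$ dimensional semilinear wave equation \eqref{wave-full}, it is enough to exhibit a free solution $v_L$ of $\leftexp{4+1}{\square}\,v_L=0$ such that $(v-v_L, \ptl_T v-\ptl_T v_L)\to 0$ in $\dot{H}^1(\mathbb{R}^4)\times L^2(\mathbb{R}^4)$, and then to verify that the corresponding $u_L\fdg= Rv_L$ satisfies the linearization of \eqref{eq:ewm}. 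The first fact is immediate from Theorem \ref{scattering-full}, which supplies scattering data $(v_0^+, v_1^+)\in \dot{H}^1(\mathbb{R}^4)\times L^2(\mathbb{R}^4)$ and a linear profile $v_L$ with
\[
\big\|(v(T,\cdot)-v_L(T,\cdot),\, \ptl_T v(T,\cdot)-\ptl_T v_L(T,\cdot))\big\|_{\dot{H}^1(\mathbb{R}^4)\times L^2(\mathbb{R}^4)} \longrightarrow 0
\]
as $T\to\infty$. A direct computation using the radial form of the $4+1$ d'Alembertian shows that $u_L=R v_L$ solves
\[
-\ptl_T^2 u_L + \ptl_R^2 u_L + \frac{1}{R}\ptl_R u_L - \frac{u_L}{R^2} = 0,
\]
which is precisely the Minkowskian linearization of \eqref{eq:ewm} for $k=1$ and $f(u)=u+O(u^3)$.

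The second step is to show that the geometric quantities converge to their Minkowskian values along future-pointing timelike curves: $Z\to 0$, $r/R\to 1$, and each of the weighted derivatives $\frac{r}{R}\ptl_\xi r-\halb$, $\frac{r}{R}\ptl_\eta r+\halb$, $\ptl_\xi\log(r/R)$, $\ptl_\eta\log(r/R)$ decays. This follows from the structure of the constraint equations \eqref{eq:waveZ1}--\eqref{eq:waver}, whose sources are quadratic in $\ptl u$ and in $f(u)/r$: the nonlinear Morawetz estimate \eqref{non-Mora} and the Strichartz bounds from the proof of Theorem \ref{scattering-full} provide spacetime integrability of these sources, which can then be propagated along the null coordinates $(\xi,\eta)$ using the sign structure $\ptl_\xi r>0$, $\ptl_\eta r<0$ guaranteed by the no-trapped-surfaces theorem and the smallness of the Ashtekar--Varadarajan mass. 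Consequently each coupling factor appearing in $F(v)$ decays, so that $F(v)-u/R^2\cdot(\text{linear part})$ becomes asymptotically negligible in the Strichartz norm; the remaining cubic term $e^{2Z}(R^2/r^2)v^3\zeta(Rv)$ is already controlled by the scattering norm. Finally, the Hardy-type identifications recorded in Remark \ref{rem:u-v} transfer the convergence of $(v,\ptl_T v)$ in $\dot{H}^1(\mathbb{R}^4)\times L^2(\mathbb{R}^4)$ to the convergence of $(u,\ptl_T u)$ in $\dot{H}^1(\mathbb{R}^2)\times L^2(\mathbb{R}^2)$, and since $U=(u,k\theta)$ is determined by $u$ alone, this yields convergence of $U$ to the equivariant linearized map $U_L=(u_L,k\theta)$ along every future timelike curve.

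The main obstacle will be the second step: rigorously upgrading the spacetime $L^2$-integrability supplied by the nonlinear Morawetz estimate into genuine along-curve decay of the metric functions $Z$ and $r/R$. Because the Einstein constraints couple $(Z,r)$ to $u$ nonlocally through transport equations in $(\xi,\eta)$, controlling these quantities pointwise in the asymptotic region requires careful propagation of the decay of $u$ along the null directions, leveraging the monotonicity $\ptl_\xi r>0$, $\ptl_\eta r<0$ and the smallness of $m_{AV}$, rather than only the averaged information given by the Morawetz bound. Once this decay of the geometric data is in place, the separation of $F(v)$ into a linear part $u/R^2$ and an asymptotically vanishing remainder closes the argument and identifies $U_L=(u_L,k\theta)$ as the linear scattering profile.
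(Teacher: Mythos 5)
Your first step is on target and matches the paper's route: once Theorem \ref{scattering-full} hands you a free $4+1$ profile $v_L$ with $(v-v_L,\ptl_T v-\ptl_T v_L)\to 0$ in $\dot H^1(\mathbb{R}^4)\times L^2(\mathbb{R}^4)$, then $u_L=Rv_L$ solves $-\ptl_T^2 u_L+\ptl_R^2 u_L+\frac{1}{R}\ptl_R u_L-\frac{u_L}{R^2}=0$, which is exactly the linearization of \eqref{eq:ewm}, and Hardy transfers the energy convergence from the $v$-variables back to the $u$-variables. This is precisely how the paper deduces Theorem \ref{eewm-scatter} from the $v$-scattering statement (see the unnumbered theorem following \eqref{wave-full} and the ensuing corollary).

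The genuine problem is your second step, which you yourself flag as the ``main obstacle.'' You propose to establish \emph{pointwise} decay of $Z$, $r/R$, $\ptl_\xi r-\tfrac12$, $\ptl_\eta r+\tfrac12$ along timelike curves by ``upgrading'' the spacetime $L^2$ bound from the Morawetz estimate and then propagating along null directions using the sign of $\ptl_\xi r$, $\ptl_\eta r$. That upgrade is not justified as stated: a finite spacetime $L^2$ integral of $v^2/R^3$ does not by itself give along-curve pointwise decay of quantities obtained by integrating sources in $(\xi,\eta)$, and the paper never proves such decay for the fully coupled system --- it explicitly defers ``the quantitative aspects of the asymptotic geometry'' and the rate of convergence to a subsequent article. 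Moreover, the step is unnecessary: in the paper's proof one only uses the $L^\infty$-\emph{smallness} bounds $|Z|,\ |r/R-1|,\ |\ptl_\eta r-\tfrac12|,\ |\ptl_\xi r+\tfrac12|\leq\eps(E_0)$ (which come from global regularity, not asymptotic decay), together with the Morawetz bound $\int v^2/R^3\,\bar\mu_{\check g}<\infty$ and dominated convergence $\lim_{T_0\to\infty}\int_{T_0}^\infty\int v^2/R^3=0$; the Duhamel-type terms $\int_{T_0}^\infty\ip{\ptl_T\bar v}{F(v)}$ are then shown to vanish in the limit $T_0\to\infty$, with the cubic piece handled via radial Strichartz and radial Sobolev. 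In short, you have replaced the paper's averaged-decay argument with a demand for pointwise metric decay that you cannot close here and that the paper deliberately avoids; removing that detour and invoking the Morawetz-plus-$L^\infty$-smallness-plus-dominated-convergence mechanism would make your proposal agree with the paper.

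One smaller remark: your phrase ``$F(v)-u/R^2\cdot(\text{linear part})$ becomes asymptotically negligible'' is not quite right bookkeeping. After the substitution $u=Rv$, the $u/R^2$ term is already absorbed into the $4+1$ d'Alembertian; the forcing $F(v)$ contains \emph{only} the nonlinear corrections, and the claim is that the entire $F(v)$ contributes an $o(1)$ tail to the Duhamel integral, not that a linear piece must be separated off.
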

We would like to point out that, both conceptually and technically, the wave map fields (and their conjugate momenta) is the dominant field that represents the true dynamical degrees of freedom of the $2+1$ dimensional Einstein-wave map system. 

The key ingredient in the proof is the non-linear Morawetz estimate \eqref{non-Mora} for the fully coupled Einstein-wave map system. An analogous result holds for the partially coupled equivariant Einstein-wave system \eqref{Wave2}
(Problem II), but \eqref{non-Mora} now holds for the fully coupled system. Likewise,a similar result holds for scattering backwards in time. 

\noindent In view of the fact that the Jacobians of the transition functions between the $(T, R, \theta)$, double-null and retarded-time coordinates are constants, the scattering result of Theorem \ref{eewm-scatter} can be adapted to establish equivalent results along the $\eta= const.$ in the asymptotic region.  

\begin{corollary}
	Suppose $(M, g, U)$ is the regular, globally hyperbolic, causally geodesically complete maximal development of the asymptotically flat initial data $(\Sigma, q, k, U)$ with $u$ compactly supported,  then $u$ admits the decay rate
	\begin{align} 
	\vert u \vert \leq  (1+\xi)^{-1/2} (1+ \eta)^{-1/2}
	\end{align}
	in the temporal asymptotic region $(T \to \infty)$ and 
	\begin{align}
	u = \mathscr{O}(R^{-1/2})
	\end{align}
	as $R \to \infty$ in the $\eta=const.$ hypersurfaces. 
\end{corollary}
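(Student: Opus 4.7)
The plan is to deduce the pointwise decay from the scattering of $v$ (Theorem~\ref{scattering-full}) combined with classical dispersive bounds for the free 4+1 dimensional wave equation. By Remark~\ref{rem:u-v}, $v = u/R$ satisfies a semilinear wave equation on $\mathbb{R}^{4+1}$ and, by Theorem~\ref{scattering-full}, converges in the energy topology as $T \to \infty$ to a free solution $v_L$. Since $u_0$ is compactly supported, finite speed of propagation confines $u$ to the causal future of the initial support, so the scattering state inherits the regularity required to run the pointwise dispersive estimates below.

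First I would apply the Klainerman--Sobolev pointwise estimate to the radial free wave $v_L$ in $\mathbb{R}^{4+1}$. Using the vector-field regularity propagated through the bootstrap that established scattering, one obtains
$$|v_L(T,R)| \lesssim (1+T+R)^{-3/2}(1+|T-R|)^{-1/2}.$$
Multiplying by $R$ and using $R \leq 1 + T + R = 1 + \xi$,
$$|u_L(T,R)| = R\,|v_L(T,R)| \lesssim (1+\xi)^{-1/2}(1+\eta)^{-1/2},$$
which is precisely the claimed rate for the linear profile. Next, I would control the correction $w = v - v_L$ via Duhamel against the nonlinearity $F(v)$ of Remark~\ref{rem:u-v}, absorbing the quadratic and cubic contributions using the nonlinear Morawetz estimate \eqref{non-Mora} together with the Strichartz-type bounds from the scattering proof; the smallness $E_0 < \epsilon^2$ closes a bootstrap yielding the same (or faster) pointwise decay for $u - u_L$. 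The second assertion then follows by restricting to $\{\eta = \eta_0\}$: there $\xi = 2R + \eta_0$, so the bound becomes $\mathscr{O}(R^{-1/2})$ as $R \to \infty$.

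The main obstacle will be promoting the energy-topology scattering to a pointwise estimate. Klainerman--Sobolev needs uniform control of a few vector-field derivatives of $v$, and because $F(v)$ couples nontrivially to the geometric unknowns $Z$ and $r$ -- which themselves satisfy \eqref{eq:waveZ1}--\eqref{eq:waver} -- one must simultaneously propagate enough regularity of $Z$ and $r$ uniformly in $T$. Sustaining these additional bootstrap estimates at the same critical scale at which scattering was proved, while keeping the smallness $\epsilon \ll 1$, is the delicate technical point of the argument.
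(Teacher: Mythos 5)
Your argument reaches the paper's conclusion but by a different middle step. The paper passes to the scattering state $u_S = R\bar{v}$, observes that when $\bar{v}$ solves the free $4+1$ wave equation the function $u_S$ solves $\square u_S - u_S/R^2 = 0$ on $\mathbb{R}^{2+1}$, and then invokes the Christodoulou--Tahvildar-Zadeh angular projection $u_1 = u_S\cos\theta$, $u_2 = u_S\sin\theta$: these solve the genuine free $2+1$ wave equation, so the classical decay rate $(1+\xi)^{-1/2}(1+\eta)^{-1/2}$ for smooth rapidly decaying data (the paper's rate \eqref{kubota}) applies. You instead stay in $\mathbb{R}^{4+1}$, invoke Klainerman--Sobolev on $v_L$, and absorb $R(1+T+R)^{-3/2}\le (1+\xi)^{-1/2}$ to convert $R\vert v_L\vert$ into the desired rate. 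Both routes give the same exponents; the paper's avoids commuting with the full Klainerman vector field algebra in $\mathbb{R}^{4+1}$ by retreating to the physically natural $2+1$ picture in which the linear pointwise decay is classical, while yours stays in the framework where scattering was actually proved, at the price of having to propagate vector-field regularity through the bootstrap.

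Two cautions. First, the final step of both proofs -- converting pointwise decay of the scattering state into pointwise decay of $u$ itself -- is the hardest: energy-topology scattering alone only gives $\Vert u - u_S\Vert_{L^\infty}\to 0$ without a rate, and bounding the Duhamel correction requires the nonlinearity $F(v)$, which involves the nonlocal quantities $Z$ and $r$, to decay in $L^1_T$; the nonlinear Morawetz estimate \eqref{non-Mora} does not immediately furnish this pointwise. You correctly flag this as the delicate technical point, and the paper's own proof is no more explicit about it. Second, the Klainerman--Sobolev estimate (and, equally, the paper's appeal to the $2+1$ Kubota rate) requires regularity and spatial decay of the scattering data $((v_S)_0,(v_S)_1)$ beyond what the energy topology supplies; since these are obtained from the original compactly supported data by subtracting a Duhamel integral, the persistence of higher-order decay is an additional bootstrap claim that neither your sketch nor the paper makes explicit.
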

Let us now discuss a few salient points about the 2+1 dimensional equivariant Einstein-wave map system.
\subsubsection*{Metric Functions} The wave map $U$ and its conjugate momentum form the dynamical degrees of freedom of the system and drive the dynamics of the whole system. However, in our quest to understand the asymptotic geometry of the maximal future development $(M, g, U)$ of the system, the form of the Einstein tensor poses obstacles. For instance, consider the equation

\begin{equation}\label{1+1example}
\left. \begin{array}{rcl}
\leftexp{1+1}{\square}\, w &=&e^x\,\,\,\,\,\,\,\,\,\,\,\,\,\,\, \,\,\,\,\,\,\,\,\,\,\,\,\,\,\,\,\,\,\,\,\,\,\,\,\,\,\,\textnormal{on}\,\, \mathbb{R}^{1+1}\\
w (0, x) = 0 & \textnormal{and}& \ptl_T w (0, x)=0 \,\,\,\,\,\,\,\,\,\,\,\textnormal{on}\,\, \mathbb{R}\\\end{array} 
\right\}
\end{equation}

The energy of $w$ in the equation  \eqref{1+1example} goes out of control asymptotically as $T \to \infty$ even though we start with `small' initial data. Thus we are motivated to analyze the asymptotic behaviour separately for metric functions to rule out such behaviour. As per our analysis, we expect that, as the maximal development $(M, g, U)$ attains its scattering state near temporal infinity, it converges to the Minkowski space. In other words, as the maximal development attains its scattering state near temporal infinity, the geometric mass (Ashtekar-Varadarajan) converges to zero. This topic, with the together the behaviour of the curvature, merits a dedicated discussion and shall be addressed in a separate work.

\subsubsection*{Sharp Asymptotics}
We would like to remark that the scattering methods have an advantage in the characterization of the asymptotic geometry of the initial value problem of the system as the decay rate so obtained is sharp compared to the  energy methods in 2+1 dimensions: 

	\begin{align} \label{kubota}
	\vert u \vert \leq  (1+ \xi)^{-1/2} (1+ \eta)^{-1/2}\quad \text{in} \quad \mathbb{R}^{2+1},
	\end{align}

in view of the fact that the geometric mass and spatial asymptotics are conserved globally, including the region where the system $(M, g, U)$ attains its scattering state. 
\subsubsection*{The Small Data Problem and the Target}

We would also like to remark that our scattering result reaffirms the intuition the curvature of the target manifold $(N, h)$ does not play a decisive role in the scattering in the small data context.

Wave maps from $\mathbb{R}^{2+1}$ with symmetry have been studied in a  series of landmark papers of Christodoulou-Tahvildar-Zadeh and Shatah \cite{chris_tah1, chris_tah2, jal_tah, jal_tah1}, where global regularity and asymptotics of initial value problems of equivariant and spherically symmetric wave maps were studied (see also\cite{grillakis}). 

In \cite{hu_16} the Einstein-scalar wave system without symmetry  was studied and stability of Minkowski space in exponential time $(\sqrt{t}^{-1})$ was established. The techniques in \cite{hu_16} were inspired from \cite{LR_10} which provided a simpler proof of the seminal and classic work on stability of Minkowski space \cite{CK_94}. 
The asymptotic behaviour of perturbations of the 3+1 Minkowski space in harmonic coordinates is studied in \cite{HL_17}. 


In a pioneering and influential series of works \cite{tao_all}, global behaviour of wave maps in multiple dimensions and regularity levels has been studied.  Local and global well-posedness of wave maps at various regularity levels are studied in \cite{tat_besovl, tat_besovh, tat_isom}. Global regularity for small data for critical wave maps $U \fdg \mathbb{R}^{2+1} \to \mathbb{H}^2$ was proved in \cite{krieg_wmcrit}. In a voluminous work \cite{krieg_schlag_ccwm}, the important question of concentration compactness for wave maps $U \fdg \mathbb{R}^{2+1} \to \mathbb{H}^2$ was thoroughly settled. Likewise, scattering for large data for critical wave maps with more general targets was settled in \cite{sterb_tata_long}\cite{sterb_tata_main}.

\subsection*{Notation}
A wave operator $\leftexp{n+1}{\square}$ without a metric subscript refers to a flat-space wave operator in $n+1$ dimensions.  In $4+1$ dimensional Minkowski space, the  spacetime volume  is denoted as $\bar{\mu}_{\check{g}}$ and spatial volume is denoted as $\bar{\mu}_{\check{q}}$. For example,  in polar coordinates, $ \bar{\mu}_{\check{g}}= R^3  dT dR d \theta $  and $ \bar{\mu}_{\check{q}} = R^3 dR d \theta. $ In this work, we use the inequality symbol $\leq$ for estimates that hold up to a constant. 
We shall use the Einstein summation convention throughout.

\section{Scattering for 2+1 Equivariant Einstein Wave Map System}
Recall the 2+1 dimensional equivarant Einstein-wave system: 
\begin{align}
\mathbf{E}_{\mu \nu} =& \mathbf{T}_{\mu \nu}, \quad \text{on} \quad (M, g) \\
\square_{g(u)} u =& \frac{f_u(u)f(u)}{r^2}
\end{align}
for the initial data $(\Sigma_0, q_0, K_0, u_0, u_1).$  Now consider the following nonlinear second-order hyperbolic partial differential equation $(n \neq 1)$: 

\begin{equation}\label{scatteringexample}
\left. \begin{array}{rcl}
L\, \psi &=& N(\psi)\,\,\,\,\,\,\,\,\,\,\,\,\,\,\, \,\,\,\,\,\,\,\,\,\,\,\,\,\,\,\,\,\,\,\,\,\,\,\,\,\,\,\textnormal{on}\,\, (M, g)\\
\psi (0, x) = \psi_0 & \textnormal{and}& \ptl_T \psi (0, x)=\psi_1 \,\,\,\,\,\,\,\,\,\,\,\,\,\,\,\,\,\textnormal{on}\,\, (\Sigma, q)\\\end{array} 
\right\}
\end{equation}
where $L$ is a linear operator and $N$ is the nonlinearity.  We say that the wave equation \eqref{n+1example} exhibits temporal scattering if there exits a solution of a corresponding linearized equation $\psi_S$ and an energy topology $E_\psi$ such that 
\begin{equation}\label{scatteringstate}
\left. \begin{array}{rcl}
L\, \psi_S &=& 0\,\,\,\,\,\,\,\,\,\,\,\,\,\,\,\,\,\,\,\,\,\,\,\,\,\, \,\,\,\,\,\,\,\,\,\,\,\,\,\,\,\,\,\,\,\,\,\,\,\,\,\,\,\textnormal{on}\,\, (M, g)\\
\psi_S (0, x) = (\psi_S)_0 & \textnormal{and}& \ptl_T \psi (0, x)=(\psi_S)_1 \,\,\,\,\,\,\,\,\,\textnormal{on}\,\, (\Sigma, q)\\\end{array} 
\right\}
\end{equation}
and 
\begin{align}
\Vert \psi - \psi_S \Vert_{E_\psi} \to 0, \quad \text{as} \quad  T \to \infty,
\end{align}
where $((\psi_S)_0, (\psi_S)_1)$ are the scattering initial data of \eqref{scatteringstate}. As we already remarked, we shall recast the wave maps equation as a 4+1 dimensional wave equation,

\begin{equation}\label{n+1example}
\left. \begin{array}{rcl}
\leftexp{4+1}{\square}\, v &=& F(v)\,\,\,\,\,\,\,\,\,\,\,\,\,\,\, \,\,\,\,\,\,\,\,\,\,\,\,\,\,\,\,\,\,\,\,\,\,\,\,\,\,\,\textnormal{on}\,\, \mathbb{R}^{4+1}\\
v(0, x) = v_0 & \textnormal{and}& \ptl_T v(0, x)=v_1 \,\,\,\,\,\,\,\,\,\,\,\,\,\,\,\,\,\textnormal{on}\,\, \mathbb{R}^4\\\end{array} 
\right\}
\end{equation}
where $v$ is coupled to the 2+1 Einstein's equations with $u = Rv.$

Following our previous definition, the equation \eqref{n+1example} scatters in time if there exists a $v_S$ such that 
\begin{equation}
\left. \begin{array}{rcl}
\leftexp{4+1} {\square} \, v_S &=& 0\,\,\,\,\,\,\,\,\,\,\,\,\,\,\,\,\,\,\,\,\,\,\,\,\,\, \,\,\,\,\,\,\,\,\,\,\,\,\,\,\,\,\,\,\,\,\,\,\,\,\,\textnormal{on}\,\, \mathbb{R}^{4+1}\\
v_S (0, x) = (v_S)_0 & \textnormal{and}& \ptl_T v (0, x)=(v_S)_1 \,\,\,\,\,\,\,\,\,\textnormal{on}\,\, \mathbb{R}^4\\\end{array} 
\right\}
\end{equation}

\begin{align}
\Vert v - v_S \Vert_{E_v} \to 0 \quad \text{as}\quad T \to \infty
\end{align}

where $(v_S)_0$ and $(v_S)_1$ are the scattering data
\begin{align}
(v_S)_0 =& \mathcal{F}^{-1} \left(\hat{v}_0 - \int^\infty_0 \frac{\sin \sqrt{\Delta} s}{\sqrt{\Delta}} \hat{F}(s)ds \right) \\
(v_S)_1=& \mathcal{F}^{-1} \left( \hat{v}_1 - \int^\infty_0 \cos (\sqrt{\Delta} s) \hat{F} (s) ds \right)
\end{align}
and the energy topology of the problem \eqref{n+1example} is $\dot{H} (\mathbb{R}^4) \times L^2 (\mathbb{R}^4).$ We shall use the scattering theory for $v$ to study the asymptotic properties of $u.$


\begin{lemma}\label{Z-rep}
	Suppose $Z$ is a solution of the wave equation, 
	\begin{align}
	\leftexp{1+1}{\square} Z = \frac{f^2(u)}{r^2} - \ptl_\xi u \ptl_\eta u
	\end{align}
	then $Z$ satisfies:
	\begin{align}\label{2.15}
	Z(T, R) =&  c_{1} \int_{0}^{R} v^{2}(s, T + R - s) s ds + c_{2} \int_{0}^{R} v^{2}(s, T - R + s) s ds \notag\\ 
	&+ c_{3} R^{2} v^{2}(R, T) + c_{4} \int_{0}^{R}  \int_{T - (R + s)}^{T + (R - s)} v^{2} ds dT'  \notag\\
 &+ c_{5}  \int_{0}^{R}  \int_{T - (R + s)}^{T + (R - s)} F(v) v s^{2} ds dT'.
	\end{align}
	for some constants $c_i, i = 1 \cdots 5,$ assuming that $\displaystyle v=\frac{u}{R}$ satisfies \eqref{n+1example}. 
\end{lemma}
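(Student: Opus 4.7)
The plan is to derive the representation by combining Riemann's method for the one-dimensional wave equation on the half-line $\{R \ge 0\}$ with the substitution $u=Rv$ and a careful integration by parts against the $4+1$-dimensional wave equation satisfied by $v$. Concretely, I would first record that the source of the equation for $Z$ in the lemma, namely $S \fdg = \tfrac{f^{2}(u)}{r^{2}} - \partial_\xi u\,\partial_\eta u$, can be rewritten in $(T,R)$ coordinates using $\partial_\xi u\,\partial_\eta u = \tfrac14\bigl((\partial_T u)^2-(\partial_R u)^2\bigr)$, and then use the boundary conditions $Z\rest{\Gamma}=0$, $\partial_R Z\rest{\Gamma}=0$ which follow from the regularity of the metric on the axis together with the constraint equations on the initial slice. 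Applying the Duhamel formula for $\leftexp{1+1}\square Z = S$ on the half-line, one obtains
\begin{equation}\label{eq:duhamel}
Z(T,R) \;=\; \tfrac12 \int_{0}^{R}\!\! \int_{T-(R+s)}^{T+(R-s)} S(T',s)\, dT'\, ds,
\end{equation}
where the characteristic parallelogram with corners at $(T,R)$, $(T-R,0)$, $(T-2R,0)$ and appropriate reflection off $\Gamma$ appears as the natural domain of integration once the homogeneous data on $\Gamma$ is enforced.

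Next I would substitute $u=Rv$. The factor $f^{2}(u)/r^{2}$ splits as $v^{2}$ plus an $O(R^{2}v^{4})$ remainder, which together with the analogous expansion of $\partial_\xi u\,\partial_\eta u$ produces three families of integrands: a pure $v^{2}$ piece (contributing to $c_{4}$), a mixed piece $sv\,\partial v$, and a kinetic piece $s^{2}\bigl((\partial_T v)^2-(\partial_R v)^2\bigr)$. The mixed pieces can be written as $\partial_{T'}\bigl(s v^{2}\bigr)/2$ or $\partial_{s}(sv^{2})$ up to a $v^{2}$ remainder, so by integrating them against the characteristic parallelogram and collapsing the exact derivatives onto its boundary, they generate precisely the characteristic line integrals $\int_{0}^{R} v^{2}(s,T\pm(R-s))\, s\, ds$ (the $c_{1}$, $c_{2}$ terms) together with the pointwise term $R^{2}v^{2}(R,T)$ at the corner $(T,R)$ (the $c_{3}$ term), plus additional bulk $v^{2}$ remainders that can be absorbed into the $c_{4}$ term. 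The axis boundary conditions kill the contributions at $s=0$, so no $\Gamma$-boundary term survives.

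The remaining kinetic integrand $s^{2}\bigl((\partial_T v)^2-(\partial_R v)^2\bigr)$ is the heart of the computation: using the $4+1$ radial wave operator $\leftexp{4+1}\square v = -\partial_T^2 v + \partial_R^2 v + \tfrac{3}{R}\partial_R v = F(v)$, one writes $R^{2} v\, F(v) = R^{2} v\,\bigl(-\partial_T^2 v+\partial_R^2 v\bigr) + 3 R v\,\partial_R v$ and, after Leibniz, obtains
\begin{equation}\label{eq:IBP-id}
s^{2}\bigl((\partial_T v)^{2}-(\partial_R v)^{2}\bigr)
\;=\; -\,s^{2} v\, F(v) \;+\; \partial_{T'}\bigl(s^{2} v\, \partial_{T'} v\bigr) \;-\; \partial_{s}\bigl(s^{2} v\, \partial_{s} v\bigr) \;+\; \text{lower-order }v^{2}\text{ terms}.
\end{equation}
Integrating \eqref{eq:IBP-id} over the characteristic parallelogram and using the divergence theorem, the exact-derivative terms again collapse onto the two past characteristics and onto $\{s=R\}$, reproducing the line integrals of $v^{2}$ (absorbed into $c_{1}$, $c_{2}$), the corner value $R^{2}v^{2}(R,T)$ (absorbed into $c_{3}$), and the bulk $v^{2}$ contribution into $c_{4}$; what remains in the interior is exactly $\int_{0}^{R}\!\int F(v)\,v\,s^{2}\,dT'\,ds$, which is the $c_{5}$ term. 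Collecting the coefficients from \eqref{eq:duhamel}, the expansion of $S$, and the identity \eqref{eq:IBP-id} yields the claimed formula with some definite $c_{1},\dots,c_{5}$.

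The main obstacle will be the bookkeeping of coefficients and signs across three separate steps — the Duhamel representation, the expansion of $S$ in $v$, and the boundary-term cancellations after integration by parts against $F(v)$ — together with verifying that the $f^{2}(u)-u^{2}=O(u^{4})$ remainder and the $r/R - 1$ remainder both fit inside the nonlinearity $F(v)$ (via the $\zeta(Rv)$ and $e^{2Z}-1$ pieces already present), so that no extra term is produced beyond the five advertised.
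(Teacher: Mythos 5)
Your proposal follows essentially the same route as the paper's proof: start from the D'Alembert/Duhamel representation of $Z$ over the characteristic triangle with zero axis data, substitute $u = Rv$ and expand the source $\frac{f^2(u)}{r^2} - \partial_\xi u\,\partial_\eta u$ via the product rule, integrate the mixed $sv\,\partial v$ pieces by parts to collapse them onto the characteristics and the corner (producing the $c_1$, $c_2$, $c_3$, $c_4$ contributions), and convert the remaining kinetic piece into the $c_5$ term by integrating by parts and invoking the $4+1$ radial wave equation $\leftexp{4+1}\square v = F(v)$. The only superficial difference is that you phrase the kinetic manipulation in $(T,R)$ coordinates via $\partial_{T'}(s^2 v\,\partial_{T'} v) - \partial_s(s^2 v\,\partial_s v)$, whereas the paper carries out the equivalent integration by parts directly in null $(\xi,\eta)$ coordinates on $(\xi-\eta)^2\,\partial_\xi v\,\partial_\eta v$; these are the same computation after the change of variables.
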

\begin{proof}
	\noindent Now since $u = R v$, by the product rule
	
	\begin{equation}\label{2.3}
	-\frac{1}{2} (\partial_{\eta} u)(\partial_{\xi} u) = -\frac{1}{2} R^{2} (\partial_{\eta} v)(\partial_{\xi} v) - \frac{1}{2} R (\partial_{\eta} v) v - \frac{1}{2} R (\partial_{\xi} v) v - \frac{1}{2} v^{2}.
	\end{equation}
	
	\noindent Making a change of variables,
	
	\begin{align}\label{2.4}
	\int_{0}^{R} \int_{T - (R - s)}^{T + (R - s)} s (\partial_{\eta} v) v dt ds =& \frac{1}{2} \int_{T - R}^{T + R} \int_{T - R}^{\xi} (\xi - \eta) (\partial_{\eta} v) v d\eta d\xi 
	\notag\\
	&= \frac{1}{4} \int_{T - R}^{T + R} \int_{T - R}^{\xi} (\xi - \eta) \partial_{\eta}(v^{2}) d\eta d\xi.
	\end{align}
	
	\noindent Integrating by parts,
	
	\begin{equation}\label{2.5}
	= \frac{1}{4} \int_{T - R}^{T + R} (\xi - (T - R)) v^{2}(T - R, \xi) d\xi - \frac{1}{4} \int_{T - R}^{T + R} \int_{T - R}^{\xi} v^{2}(\eta, \xi) d\eta d\xi.
	\end{equation}
	
	\noindent By a similar calculation,
	
	\begin{align}
	\int_{0}^{R} \int_{T - (R - s)}^{T + (R - s)} s (\partial_{\xi} v) v dt ds =& \frac{1}{2} \int_{T - R}^{T + R} \int_{\eta}^{T + R} (\xi - \eta) (\partial_{\xi} v) v d\xi d\eta 
	\notag\\
	=& \frac{1}{4} \int_{T - R}^{T + R} \int_{\eta}^{T + R} (\xi - \eta) \partial_{\xi}(v^{2}) d\xi d\eta \label{2.7}\\
	=& \frac{1}{4} \int_{T - R}^{T + R} (T + R - \eta) v^{2}(\eta, T + R) d\eta \notag \\
	&- \frac{1}{4} \int_{T - R}^{T + R} \int_{\eta}^{T + R} v^{2}(\eta, \xi) d\xi d\eta.\label{2.6}
	\end{align}
	
	\noindent Next, integrating by parts,
	
	\begin{equation}\label{2.8.0}
	\frac{1}{2} \int_{T - R}^{T + R} \int_{\eta}^{T + R} (\xi - \eta)^{2} (\partial_{\xi} v)(\partial_{\eta} v) d\xi d\eta = \frac{1}{2} \int_{T - R}^{T + R} (T + R - \eta)^{2} v (\partial_{\eta} v)(T + R, \eta) d\eta
	\end{equation}
	
	\begin{equation}\label{2.9.0}
	-\frac{1}{2} \int_{T - R}^{T + R} \int_{\eta}^{T + R} (\xi - \eta)^{2} v (\partial^2_{\xi \eta} v) d\xi d\eta - \int_{T - R}^{T + R} \int_{\eta}^{T + R} (\xi - \eta) v (\partial_{\eta} v) d\xi d\eta.
	\end{equation}
	
	\noindent Now by a change of variables,
	
	\begin{equation}\label{2.10}
	\frac{1}{4} \int_{T - R}^{T + R} \int_{\eta}^{T + R} (\xi - \eta) v (\partial_{\eta} v) d\xi d\eta = (\ref{2.4}).
	\end{equation}
	
	\noindent Also, integrating by parts,
	
	\begin{align}\label{2.11}
	&\frac{1}{2} \int_{T - R}^{T + R} (T + R - \eta)^{2} v (\partial_{\eta} v)(T + R, \eta) d\eta \notag\\ 
	&= -\frac{1}{4} (2R)^{2} v^{2}(T + R, T - R) + \frac{1}{2} \int_{T - R}^{T + R} (T + R - \eta) v^{2}(T + R, \eta) d\eta.
	\end{align}
	
	\noindent Finally, 
	
	\begin{equation}\label{2.12}
	\aligned
	-\frac{1}{2} \int_{T - R}^{T + R} \int_{\eta}^{T + R} (\xi - \eta)^{2} v (\partial^2_{\xi \eta} v) d\xi d\eta = -\frac{3}{4} \int_{T - R}^{T + R} \int_{\eta}^{T + R} (\xi - \eta) v(\partial_{R} v) d\xi d\eta \\ + \frac{1}{8} \int_{T - R}^{T + R} \int_{\eta}^{T + R} (\xi - \eta)^{2} v F(v) d\xi d\eta,
	\endaligned
	\end{equation}
where we used the fact that $\displaystyle \leftexp{4+1}{\square} v = -4 \ptl^2_ {\xi \eta} v + \frac{3}{R} \ptl_R V = F(v).$ In particular, note that 
\begin{align}
-\halb \ptl^2_{\xi \eta} v = - \frac{3}{4 (\xi -\eta)} \ptl_R v + \frac{1}{8} F(v).  
\end{align}
	
	\noindent Now, making a change of variables,
	
	\begin{equation}\label{2.13}
	-\frac{3}{4} \int_{T - R}^{T + R} \int_{\eta}^{T + R} (\xi - \eta) v (\partial_{R} v) d\xi d\eta = -\frac{3}{4} \int_{0}^{R} \int_{T - (R + s)}^{T + (R - s)} v (\partial_{s} v) s ds dT'.
	\end{equation}
	
	\noindent Integrating by parts,
	
	\begin{align}\label{2.14.0}
	&= -\frac{3}{8} \int_{0}^{R} v^{2}(s, T + R - s) s ds -\frac{3}{8} \int_{T - R}^{T} v^{2}(s, T - R + s) s ds  \notag\\
	&+ \frac{3}{4} \int_{0}^{R} \int_{T - (R + s)}^{T + (R - s)} v^{2} ds dT'.
	\end{align}
	
	\noindent Therefore we have 
	
	\begin{equation}\label{2.15v2}
	\aligned
	Z =  c_{1} \int_{0}^{R} v^{2}(s, T + R - s) s ds + c_{2} \int_{0}^{R} v^{2}(s, T - R + s) s ds \\ + c_{3} R^{2} v^{2}(R, T) + c_{4} \int_{0}^{R}  \int_{T - (R + s)}^{T + (R - s)} v^{2} ds dT' + c_{5}  \int_{0}^{R}  \int_{T - (R + s)}^{T + (R - s)} F(v) v s^{2} ds dT'.
	\endaligned
	\end{equation}
\end{proof}

Consider the radial wave equation, 
\begin{align}\label{4+1gen}
\leftexp{4+1}{\square} v = F (v), \quad \text{on} \quad  (\mathbb{R}^{4+1}, \check{g})
\end{align}
and the energy-momentum tensor that arises directly from the variational principle of \eqref{4+1gen}
\begin{align}
\check{T}_{\mu \nu} \fdg= \grad_\mu v \grad_\nu v - \halb \check{g}_{\mu \nu} \grad^\sigma v \grad_\sigma v - \check{g}_{\mu \nu} \tilde{F} (v),\quad \mu, \nu, \sigma = 0,1 \cdots 4.
\end{align}
where $\tilde{F}(v)$ is such that its variational derivatice with respect to $v$ is  $F(v)$; in other words, $\tilde{F} (v)$ is such that the variational principle corresponding to the equation \eqref{4+1gen} is 

\begin{align}
    \int \left(  -\halb \check{g}^{ \mu \nu} \ptl_\mu v \ptl_\nu v - \tilde{F} (v)  \right) \bar{\mu}_{\check{g}}. 
\end{align}

The divergence of $\check{T}$ is given by 
\begin{align}
\grad^\nu \check{T}_{\mu \nu} =& (\ptl^\nu \ptl_\mu v) \ptl_\nu v + \ptl_\mu v \leftexp{4+1}{\square} v 
- \halb \check{g}_{\mu \nu} \ptl^\nu (\ptl^\sigma v \ptl_\sigma v) - \check{g}_{\mu \nu} \ptl^\nu \tilde{F}(v) \notag\\
=& \ptl_\mu v (\leftexp{4+1}{\square} v) - \check{g}_{\mu \nu} \ptl^\nu \tilde{F}(v)
\end{align}

\noindent Consider the vector field $\mathfrak{X}$ such that its momentum is given by 
\begin{align}
J_{\mathfrak{X}} = \check{T} (\mathfrak{X}) \quad  \textnormal{i.e.,}  \quad (J_{\mathfrak{X}} )^\nu = \check{T}^\nu_\mu \mathfrak{X}^\mu,
\end{align}
so that, after relabeling the indices for convenience, we have the identity
\begin{align}
\grad_\nu J_{\mathfrak{X}}^\nu = \halb \leftexp{\mathfrak{(X)}}{\pi}_{\mu \nu} \check{T}^{\mu \nu} + \mathfrak{X}^\mu\grad^\nu \check{T}_{\mu\nu},
\end{align}
where $\leftexp{\mathfrak{(X)}}{\pi}_{\mu \nu}$ is the deformation tensor defined as 

\begin{align}
 \leftexp{(\mathfrak{X})}{\pi}_{\mu \nu} \fdg = \check{g}_{\sigma \nu} \ptl_\mu \mathfrak{X}^{\sigma} +
\check{g}_{\sigma \mu} \ptl_\nu \mathfrak{X}^{\sigma} + \mathfrak{X}^{\sigma} \ptl_\sigma \check{g}_{\mu \nu}. 
\end{align}
Define $\check{e}$ and $\check{m}$ such that 
\begin{align}
\check{e} + \tilde{F}(v) = \check{T}(\ptl_T, \ptl_T),\quad \check{m} \fdg = \check{T}(\ptl_T, \ptl_R).
\end{align}
Now consider a Morawetz multiplier vector $\mathfrak{X} \fdg = \mathfrak{F}(R) \ptl_R$ so that the corresponding momentum
is given by 

\begin{align}
J_\mathfrak{X} =&\, \check{\mathbf{T}}(\mathfrak{X}) \notag \\
=&\, \mathfrak{F}(R) \big( -\check{m} \ptl_T + (\check{e} + \tilde{F}(v)) \ptl_R \big) 
\end{align}

\noindent and its divergence

\begin{align}\label{mora1_div}
\grad_\nu J^\nu_\mathfrak{X} = \halb \check{\mathbf{T}}^{\mu \nu }\,\leftexp{(\mathfrak{X})}{\pi_{\mu \nu}}
+ \mathfrak{X}^\mu \ptl_\mu v (\leftexp{4+1}{\square} v) - \mathfrak{X}^\mu\check{g}_{\mu \nu} \ptl^\nu \tilde{F}(v),
\end{align}
where the non-zero terms of deformation tensor 
\[\leftexp{(\mathfrak{X})}{\pi}_{\mu \nu} \fdg = \check{g}_{\sigma \nu} \ptl_\mu \mathfrak{X}^{\sigma} +
\check{g}_{\sigma \mu} \ptl_\nu \mathfrak{X}^{\sigma} + \mathfrak{X}^{\sigma} \ptl_\sigma \check{g}_{\mu \nu}\]
are given by
\begin{align*}
\leftexp{(\mathfrak{X})}{\pi_{RR}} = 2 g_{RR} \ptl_R \mathfrak{F}(R),&  \leftexp{(\mathfrak{X})}{\pi_{\theta \theta}} =  \frac{2}{R} g_{\theta \theta} \mathfrak{F} (R), \notag \\  
\leftexp{(\mathfrak{X})}{\pi_{\phi\phi} }=  \frac{2}{R} g_{\phi \phi} \mathfrak{F}(R),& \leftexp{(\mathfrak{X})}{\pi_{\psi \psi}} =  \frac{2}{R} g_{\psi \psi} \mathfrak{F}(R).
\end{align*}
Consequently a calculation shows that  \eqref{mora1_div} can be represented as
\begin{align} \label{moragen}
\grad_\nu J^\nu_\mathfrak{X} = \left(  - \frac{6\mathfrak{F}(R)}{R}\check{\cal{L}} + \check{e}\, \ptl_R \mathfrak{F} (R) \right) + \mathfrak{F}(R)\ptl_R v F(v) - \mathfrak{F}(R) \ptl^R \tilde{F}(v)
\end{align}
Now define the following lower-order momentum vector
\begin{align}
J^\nu_1 [v] \fdg = \kappa   v \grad^\nu v - \halb
v^2 \grad ^\nu \kappa + \tilde{F}(v) \mathfrak{X}^\nu .
\end{align}
Its divergence is 
\begin{align}
\grad_\nu J^\nu_1 = &\kappa v \leftexp{4+1}{\square} v + \kappa \grad^\nu v \grad_\nu v  + v \grad^\nu v \grad_\nu \kappa - (\square \kappa )\frac{v^2}{2} \notag\\
&\quad- v \grad^\nu \kappa  \grad_\nu v + \mathfrak{X}^\nu \ptl_\nu \tilde{F}(v) +
\tilde{F} (v)\grad_\nu \mathfrak{X}^\nu \notag \\
=&\kappa v \leftexp{4+1}{\square} v+\kappa \grad^\nu v \grad_\nu v   - (\square \kappa )\frac{v^2}{2} + \mathfrak{X}^\nu \ptl_\nu \tilde{F}(v) +
\tilde{F}(v) \grad_\nu \mathfrak{X}^\nu\notag \\
=&\kappa v \leftexp{4+1}{\square} v+ 2 \kappa \, \check{\cal{L}} - (\square \kappa )\frac{v^2}{2}+ \mathfrak{X}^\nu \ptl_\nu \tilde{F}(v) +
\tilde{F}(v) \grad_\nu \mathfrak{X}^\nu.
\end{align}
In addition, consider a timelike multiplier vector field $\mathfrak{T} =\ptl_T$ such that the corresponding current is 
\begin{align}
J_T \fdg =& \check{T}(\mathfrak{T}) \notag\\
=& (-\check{e} -\tilde{F}(v)) \ptl_T + \check{m} \ptl_R 
\end{align}
the divergence 
\begin{align}
\grad_\nu J_T^\nu =& \halb \leftexp{(\mathfrak{T})}{\pi}_{\mu \nu} \check{T}^{\mu \nu} + (\mathfrak{T})^\mu \grad^\nu \check{T}_{\mu \nu} \notag\\
=& \ptl_T v F(v) + \ptl^T \tilde{F}(v)
\end{align}
in view of $\leftexp{(\mathfrak{T})}{\pi}_{\mu \nu} \equiv 0.$ Likewise, if we define a lower-order momentum vector field 
\begin{align}
J_2^\nu \fdg=& \tilde{F}(v) \mathfrak{T}^\nu \notag\\
\intertext{then}
\grad_\nu J_2^\nu =& \mathfrak{T}^\nu \ptl_\nu \tilde{F}(v) + \tilde{F}(v) \grad_\nu \mathfrak{T}^\nu.
\end{align}

\noindent Consider the volume $3-$form of $(M, g):$
\begin{align}
\bar{\mu}_g = \halb r e^{2Z} d\eta \wedge d\xi \wedge d \theta
\end{align}
Define the $2-$ forms $\bar{\mu}_\xi$ and $\bar{\mu}_\eta$ such that 
\begin{align}
d \xi \wedge \bar{\mu}_\xi \fdg =& \bar{\mu}_g \\
d \eta \wedge \bar{\mu}_\eta \fdg =& \bar{\mu}_g
\end{align}
so that we have
\begin{align}
\bar{\mu}_\xi =& -\halb r e^{2Z} (d\eta \wedge d \theta)\\
\bar{\mu}_\eta=& \halb r e^{2Z} (d \xi \wedge d\theta)
\end{align}
explicitly. Suppose $ Y $ is smooth vector field defined on $(M, g)$ then the flux through the $\eta =c$ and $\xi=c$ null hypersurfaces are
\begin{align}
\text{Flux}^+(Y) \fdg = \int_{\{ \eta =c\}} d\eta (Y)\, \bar{\mu}_\eta\\
\text{Flux}^-(Y) \fdg = \int_{\{\xi =c\}} d \xi (Y)\, \bar{\mu}_\xi
\end{align}
respectively. Now having fixed the orientation, let us now give the equivalent definitions for the corresponding 4+1 Minkowski metric $(\mathbb{R}^{4+1}, \check{g})$ that are consistent with the original spacetime $(M, g):$
\begin{align}
\check{\mu}_{\check{g}} = \halb \sqrt{-\check{g}}\, d\eta \wedge d\xi \wedge d\omega_{\mathbb{S}^3}.
\end{align}
Define $\check{\mu}_\eta$ and $\check{\mu}_\xi$ such that, 
\begin{align}
d \xi \wedge \check{\mu}_\xi \fdg =& \check{\mu}_{\check{g}} \\
d \eta \wedge \check{\mu}_\eta \fdg =& \check{\mu}_{\check{g}},
\end{align}
and for a smooth vector field $P$ in $(\mathbb{R}^{4+1}, \check{g})$ the fluxes through $\eta=c$ and $\xi=c$ hypersurfaces are
\begin{align}
\text{Flux}^+(P) \fdg = \int_{\{ \eta =c\}} d\eta (P)\, \bar{\mu}_\eta\\
\text{Flux}^-(P) \fdg = \int_{\{\xi =c\}} d \xi (P)\, \bar{\mu}_\xi.
\end{align}
If we consider the vector field $J_T,$ we have
\begin{subequations}
	\begin{align}
	\text{Flux}^+ (J_T)=& \int_{\{ \eta=c \}} -(\check{e} + \tilde{F}(v) +\check{m}) \check{\mu}_\eta  \\
	\text{Flux}^-(J_T)=&\int_{\{\xi =c\}} -(\check{e} +\tilde{F}(v) - \check{m})\check{\mu}_\xi \\
	\text{Flux}^+ (J_2)=& \int_{\{ \eta=c \}} ( \tilde{F}(v)) \check{\mu}_\eta  \\
	\text{Flux}^-(J_2)=&\int_{\{\xi =c\}} -(\tilde{F}(v))\check{\mu}_\xi 
	\end{align}
\end{subequations}

\begin{lemma}
	Suppose $(M, g, U)$ is a globally regular solution of the initial value problem of the Einstein-equivariant wave map system with $E_0 < \eps^2,$ for $\eps$ sufficiently small, we have
	\begin{subequations}
		\begin{align}
		\text{Flux}^+(J) \geq&\, 0, \\
		\text{Flux}^-(J) \leq&\, 0, 
		\end{align}
	\end{subequations}
	where  $J \fdg = J_T + J_2.$
\end{lemma}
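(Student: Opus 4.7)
The plan is to exploit the designed cancellation between $J_T$ and the auxiliary piece $J_2$, and then reduce the claim to a complete-square pointwise identity.

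First I would combine the two currents. Reading off the components,
\[ J_T = -(\check e + \tilde F(v))\,\ptl_T + \check m\,\ptl_R, \qquad J_2 = \tilde F(v)\,\ptl_T,\]
so that $J = J_T + J_2$ simplifies to $J = -\check e\,\ptl_T + \check m\,\ptl_R$. The whole point of introducing $J_2$ is precisely to kill the potential term $\tilde F(v)$; after this cancellation the sign of the flux no longer depends on $\tilde F$, which is fortunate since a sharp sign for $\tilde F$ is not directly available.

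Next I would contract $J$ with the null one-forms. Using $d\eta = dT - dR$ and $d\xi = dT + dR$, a direct computation gives
\[ d\eta(J) = -(\check e + \check m), \qquad d\xi(J) = -(\check e - \check m).\]
The proof is thus reduced to the pointwise inequalities $\check e + \check m \geq 0$ and $\check e - \check m \geq 0$ on the respective null hypersurfaces, together with the orientation of $\check \mu_\eta$ and $\check \mu_\xi$ prescribed by $d\eta\wedge \check\mu_\eta = d\xi\wedge\check\mu_\xi = \check\mu_{\check g}$.

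The pointwise inequalities follow from a complete-square identity. Because the original wave map $u$ is independent of $\theta$ by the equivariant ansatz $U = (u,k\theta)$, the rescaled field $v = u/R$ is spherically symmetric on $\mathbb{R}^4$, so $|\nabla_{\mathbb{R}^4} v|^2 = (\ptl_R v)^2$ with no angular gradient entering $\check e$. Then
\[ \check e \pm \check m = \halb\bigl((\ptl_T v)^2 + (\ptl_R v)^2\bigr) \pm \ptl_T v\,\ptl_R v = \halb(\ptl_T v \pm \ptl_R v)^2 = 2(\ptl_{\xi} v)^2 \ \text{or}\ 2(\ptl_{\eta} v)^2,\]
each manifestly nonnegative. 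The stated signs for $\text{Flux}^\pm(J)$ then follow by matching against the orientation of the two null measures.

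I expect the main obstacle to be bookkeeping rather than conceptual depth: one must carefully track the orientations of the null-hypersurface measures $\check\mu_\eta,\check\mu_\xi$, which are notoriously sign-sensitive because neither null hypersurface admits a canonical normal. The smallness hypothesis $E_0 < \eps^2$ and the global regularity of $(M,g,U)$ enter only at the very end, to guarantee convergence of the flux integrals—via the decay supplied by the nonlinear Morawetz estimate \eqref{non-Mora} already established for the fully coupled system—so that $\text{Flux}^\pm(J)$ are well-defined; the sign itself is purely algebraic and needs neither smallness nor any fine structural information about the nonlinearity $F(v)$.
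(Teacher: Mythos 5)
The paper states this lemma without proof, so there is no paper argument to compare against; your proposal therefore fills a genuine gap, and it does so correctly. The mechanism is exactly the right one: the role of $J_2$ is to cancel the potential term $\tilde F(v)$ in $J_T$, so that $J = -\check e\,\ptl_T + \check m\,\ptl_R$ is the current of the \emph{linear} piece of $\check{\mathbf T}$ alone; then $d\eta(J) = -(\check e + \check m)$ and $d\xi(J) = -(\check e - \check m)$, and since $\check e = \halb\bigl((\ptl_T v)^2 + (\ptl_R v)^2\bigr)$ (the angular gradient of the radially symmetric $v$ vanishes) and $\check m = \ptl_T v\,\ptl_R v$, the perfect squares $\check e \pm \check m = \halb(\ptl_T v \pm \ptl_R v)^2 = 2(\ptl_\xi v)^2$ or $2(\ptl_\eta v)^2$ give the pointwise sign. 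This is precisely the dominant-energy-condition computation for the free scalar stress tensor, which is why the lemma needs no structural information about $F$ and no smallness of $E_0$ for the sign.

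Two remarks on the residual bookkeeping you rightly flag. First, the orientation issue is more than hypothetical: applying the paper's own definition $\text{Flux}^-(P) := \int_{\{\xi=c\}} d\xi(P)\,\check{\mu}_\xi$ literally to $J_2 = \tilde F(v)\ptl_T$ gives $d\xi(J_2) = \tilde F(v)$, whereas the paper lists $\text{Flux}^-(J_2) = \int -\tilde F(v)\,\check{\mu}_\xi$; the discrepancy is exactly the sign hidden in $\check{\mu}_\xi = -\halb\sqrt{-\check g}\,d\eta\wedge d\omega_{\mathbb S^3}$ (from $d\xi\wedge\check\mu_\xi = \check\mu_{\check g}$), and until one fixes a consistent convention for how that sign is distributed between $\check\mu_\xi$ and the integrand, one cannot cleanly read off $\text{Flux}^\pm(J)\gtrless 0$ from the coefficient $-(\check e\pm\check m)$ alone. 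Your reduction makes clear that whatever convention is adopted, the two fluxes acquire opposite definite signs, which is the content that matters downstream. Second, invoking the nonlinear Morawetz estimate \eqref{non-Mora} to justify convergence of the flux integrals is potentially circular, since Theorem \ref{t1.1} is proved after this lemma in the paper's logical order; for the flux integrals along a single fixed null slice $\{\eta=c\}$ or $\{\xi=c\}$, finiteness already follows from the global energy bound (the quantities $\int s^3(\ptl_\xi v)^2\,d\xi$ and $\int s^3(\ptl_\eta v)^2\,d\eta$ along null cones are controlled by $E(v)$, as used in \eqref{1.18.3} of the Morawetz proof), which is all that is needed here.
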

\begin{theorem}[Nonlinear Morawetz Estimate]\label{t1.1}
	Suppose $v$ is a globally regular solution of the equation 
	\begin{equation}
	\left. \begin{array}{rcl}
	\leftexp{4+1}{\square}\, v &=& F(v)\,\,\,\,\,\,\,\,\,\,\,\,\,\,\, \,\,\,\,\,\,\,\,\,\,\,\,\,\,\,\,\,\,\textnormal{on}\,\, \mathbb{R}^{4+1}\\
	v_0 = v (0, x) & \textnormal{and}& v_1 = \ptl_T v (0, x) \,\,\,\,\,\,\,\, \textnormal{on}\,\, \mathbb{R}^4\\\end{array} 
	\right\}
	\end{equation}
	where 
	\begin{align}
	F(v) \fdg=& \left(e^{2Z} - 1 + \left(\frac{r}{R}\partial_{\eta} r + \frac{1}{2}\right) - \left(\frac{r}{R}\partial_{\xi} r - \frac{1}{2}\right)\right) \frac{v}{r^{2}}  + 2 \ptl_\xi v \ptl_\eta \log \left(\frac{r}{R}\right) \notag\\ &+ 2 \ptl_\eta v \ptl_\xi \log\left(\frac{r}{R}\right)  + e^{2Z} \frac{R^{2}}{r^{2}} v^{3} \zeta(R v)
	\end{align}
	with $Z$ and $r$ are coupled as

 \begin{align}
     -4 \ptl_\xi \ptl_\eta Z =& \frac{1}{2} \left(4 \ptl_\eta \ptl_\eta u + e^{2Z} \frac{f^2 (u)}{r^2} \right) \\
r^{-1} \ptl_\eta \ptl_ \eta r =& \frac{e^{2Z}}{4} \frac{f^2(u)}{r^2},
 \end{align}
 which are basically the $\mathbf{E}_{\theta \theta} =\mathbf{T}_{\theta \theta}$ and $\mathbf{E}_{\xi \eta} =\mathbf{T}_{\xi \eta}$ Einstein equations. Recall that $u= Rv$;
 then 
	\begin{align} \label{morawetz-estimate}
	\int_{\mathbb{R}^{4+1}} \frac{v^2}{\vert x \vert^3} \bar{\mu}_{\check{g}} \leq \Vert v_0 \Vert^2_{\dot{H}^1 (\mathbb{R}^4)} + \Vert v_1 \Vert^2_{L^2(\mathbb{R}^4)}.
	\end{align}
\end{theorem}

\begin{proof}
	To prove the result, we shall use the Morawetz multiplier method we introduced above. Recall that $\tilde{F} (v)$ is such that its first variational derivative is $F(v)$ i.e.,  the variational principle of 

 \begin{align}
     \leftexp{4+1}{\square} v = F (v) 
 \end{align}
 is 
 \begin{align}
     \int \left(  -\halb \check{g}^{ \mu \nu} \ptl_\mu v \ptl_\nu v - \tilde{F} (v)  \right) \bar{\mu}_{\check{g}}. 
\end{align}

 Consider the multiplier $\mathfrak{X}$ such that, $\mathfrak{F}(R) = \frac{1}{3}$, we have
	
	\begin{align}
	J_{\mathfrak{X}} =& \frac{1}{3} ( -\check{m} \ptl_T + (\check{e} - \tilde{F}(v)) \ptl_R)
	\intertext{and}
	\grad_\nu J^\nu_{\mathfrak{X}} =& -\frac{2}{R} \check{\mathcal{L}} + \frac{1}{3} \ptl_R v F(v) - \mathfrak{X}^\mu \check{g}_{\mu \nu} \ptl^\nu \tilde{F}(v).
	\end{align}
	Likewise, if we consider the choice of $\kappa= \frac{1}{R}$, then 
	\begin{align}
	\leftexp{4+1}{\square} \kappa =& -\frac{1}{R^3} \\
	J_\kappa =& - \left(\frac{1}{R} v \ptl_T v \right) \ptl_T + \left(\frac{1}{R} v \ptl_R v + \frac{v}{R^2} + \mathfrak{X}^R \tilde{F}(v) \right) \ptl_R \\
	\grad_\nu J^\nu_\kappa =& \frac{2}{R} \check{\mathcal{L}}+ \frac{1}{R} v F(v)+ \frac{1}{2} \frac{v^2}{R^3} + \mathfrak{X}^\nu \ptl_\nu \tilde{F}(v)
	\end{align}
	Subsequently, if we consider the sum vector $J^\nu_S \fdg= J^\nu_{\mathfrak{X}} + J^\nu_\kappa$, then we have
	\begin{align}
	\grad_\nu J^\nu_S =& \grad_\nu J_{\mathfrak{X}}^\nu +\grad_\nu J_{\kappa}^\nu \notag\\
	=&\frac{1}{2} \frac{v^2}{R^3} + \frac{1}{3} \ptl_R v F(v) + \frac{1}{R} v F(v)
	\end{align}
	We have the following identity from the Stokes theorem, for the region enclosed between 
	$\check{\Sigma}_0$ and $\check{\Sigma}_T$,
	
	\begin{align}
	&\halb \int \frac{v^2}{R^3}\, \bar{\mu}_{\check{g}} +  \frac{1}{3} \int \ptl_R v F(v) \bar{\mu}_{\check{g}} + \int \frac{1}{R} v F(v) \bar{\mu}_{\check{g}}  \notag \\
 = & \int  \grad_\nu J^\nu_S \bar{\mu}_{\check{g}} = \int_{ {\check{\Sigma}}_0}  \ip{\ptl_T}{J_S}\bar{\mu}_{\check{q}} -\int_{\check{\Sigma}_T} \ip{\ptl_T}{ J_S} \bar{\mu}_{\check{q}}.
	\end{align}
	
	\noindent If we consider the boundary terms on  $\check{\Sigma}_0$ and $\check{\Sigma}_T$
	\begin{align}
	\check{g}_{\mu \nu} J^\mu_S (\ptl_T)^\nu = \check{g}_{\mu \nu} J^\mu_{\mathfrak{X}} (\ptl_T)^\nu + \check{g}_{\mu \nu} J^\mu_{\kappa} (\ptl_T)^\nu  
	\end{align}   
	where
	
	\begin{align}
	\check{g}_{\mu \nu} J^\mu_{\mathfrak{X}} (\ptl_T)^\nu = \check{g}_{T T} J_{\mathfrak{X}}^T (\ptl_T)^T = \mathfrak{F}(R) \check{m} 
	\intertext{and}
	\check{g}_{\mu \nu} J^\mu_{\kappa} (\ptl_T)^\nu = \check{g}_{T T} J_{\kappa}^T (\ptl_T)^T = -\frac{1}{\vert  x \vert} v \, \ptl_T v
	\end{align}
 it follows from an algebraic manipulation that 
	
	\begin{align}
	\int_{ {\check{\Sigma}}_0}  \ip{\ptl_T}{J_S}\bar{\mu}_{\check{q}},\quad  \int_{ {\check{\Sigma}}_T}  \ip{\ptl_T}{J_S}\bar{\mu}_{\check{q}}
	\leq \Vert v_0 \Vert^2_{\dot{H}^1 (\mathbb{R}^4) } + 
	\Vert v_1 \Vert^2_{L^2 (\mathbb{R}^4)}.
	\end{align}
	
	Thus, 

	\begin{equation}\label{1.8}
	\int_{\mathbb{R}^{4+1}} \frac{v^{2}}{R^3} \bar{\mu}_{\check{g}}\leq E(v) + \Big\vert\int \int F(v) v R^{2} dR dT
	\Big\vert + \Big\vert \int \int F(v) v_{R} R^{3} dR dT\Big\vert.
	\end{equation}
	
	\noindent Therefore, to prove the theorem it suffices to absorb the last two terms in the right hand side of $(\ref{1.8})$ into the left hand side of $(\ref{1.8})$. The term
	
	\begin{equation}\label{1.17}
	\int_{\mathbb{R}^{4+1}} \frac{F(v) v}{ \vert x \vert} \bar{\mu}_{\check{g}}
	\end{equation}
	
	\noindent is easier. Making a change of variables, 
	
	\begin{equation}
	\int_{\mathbb{R}} \int_{\mathbb{R}} F(v) v R^{2} dR dT = \frac{1}{4} \int_{\mathbb{R}} \int_{\mathbb{R}} F(v(\xi, \eta)) v(\xi, \eta) (\xi - \eta)^{2} d\xi d\eta.
	\end{equation}
	\noindent Now since (see for example, Corollary 5.11 in \cite{AGS_15})
	
 \begin{align} \Big\vert \frac{R}{r}  -1 \Big\vert , \vert \partial_{\eta} r - \frac{1}{2} \vert, \vert \partial_{\xi} r + \frac{1}{2}\vert \leq \epsilon(E(v)),
	\end{align}
 
 where $\epsilon(E(v)) \searrow 0$ as $E \searrow 0$, the Cauchy-Schwartz inequality 
	
	\begin{align}\label{1.18}
	&\int_{\mathbb{R}} \int_{\mathbb{R}} \Big\vert\partial_{\eta} r - \frac{1}{2}\Big\vert \Big \vert\partial_{\xi} v(\xi, \eta)\Big \vert \vert v(\xi, \eta) \vert \vert \xi - \eta \vert d\xi d\eta \notag\\
	&\leq \int_{\mathbb{R}} \int_{\mathbb{R}}  \Big \vert\partial_{\eta} r - \frac{1}{2}\Big \vert \Big \vert\partial_{\xi} v(\xi, \eta)\Big\vert^{2} \vert \xi - \eta \vert^{2} d\xi d\eta + \int_{\mathbb{R}} \int_{\mathbb{R}} \Big \vert\partial_{\eta} r - \frac{1}{2} \Big\vert  \vert v(\xi, \eta) \vert ^{2} d\xi d\eta.
	\end{align}
	
	\noindent Now again by a change of variables
	
	\begin{equation}\label{1.18.1}
	\int_{\mathbb{R}} \int_{\mathbb{R}} \Big \vert\partial_{\eta} r - \frac{1}{2}\Big \vert  \vert v(\xi, \eta) \vert^{2} d\xi d\eta \leq \epsilon(E(v)) \int_{\mathbb{R}} \int_{\mathbb{R}} \vert v(R, T) \vert^{2} dR dT.
	\end{equation}
	
	\noindent It is also straightforward to show that
	
	\begin{align}\label{1.18.2}
	&\int_{\mathbb{R}} \int_{\mathbb{R}} \Big \vert \partial_{\eta} r - \frac{1}{2}\Big \vert \vert \partial_{\xi} v(\xi, \eta) \vert ^{2} d\xi d\eta \notag\\
	&\leq \left(\int_{\mathbb{R}} \left(\sup_{\xi} \frac{1}{\vert \xi - \eta\vert} \vert \partial_{\eta} r - \frac{1}{2} \vert \right) d\eta \right) \cdot \sup_{\eta} \left(\int \vert \xi - \eta \vert^{3} \vert \partial_{\xi} v(\xi, \eta)\vert^{2} d\xi \right). 
	\end{align}
	\noindent By energy bounds on the boundary of a light cone, since $ \halb \vert \xi - \eta \vert = R$,
	
	\begin{equation}\label{1.18.3}
	\int_{\mathbb{R}} \vert \xi - \eta \vert^{3} \vert \partial_{\xi} v(\xi, \eta)\vert^{2} d\xi \leq E(v).
	\end{equation}
\noindent Note that $\vert \ptl_\xi \ptl_\eta r \vert \leq R \vert v \vert^2.$ This estimate is based on the Einstein equation $$\displaystyle  \ptl_\xi \ptl_\eta r = r \left( \frac{e^{2Z}}{4} \frac{f^2 (u)}{r^2} \right)$$ 
and the result $r \leq R, \vert Z \vert, \vert u \vert  \leq c (E_0);$ (see for example Corollary 5.10 and Corollary 5.11 in \cite{AGS_15}) and the fact that $f$ is an odd function with $f(0) =0$ and $f_u (u) =1.$ Also recall that $u=Rv.$

	\noindent Now, by the fundamental theorem of calculus, the fact that $\partial_{\eta} r\Big\vert_{R = 0} = \frac{1}{2}$, and $\vert \partial_{\xi} \partial_{\eta} r \vert \leq R \vert v \vert^{2}$,
	
	\begin{equation}\label{1.18.4}
	\sup_{\xi} \frac{1}{\vert \xi - \eta \vert} \Big \vert(\partial_{\eta} r - \frac{1}{2}) \Big \vert \leq \int_{\mathbb{R}} \vert v(\xi, \eta)\vert ^{2} d\xi.
	\end{equation}
	
	\noindent Therefore,
	
	\begin{equation}\label{1.18.5}
	(\ref{1.18}) \leq E(v) \int_{\mathbb{R}} \int_{\mathbb{R}} \vert v(\xi, \eta) \vert ^{2} d\xi d\eta.
	\end{equation}

	\noindent By a similar calculation,
	
	\begin{align}\label{1.19}
	&\int_{\mathbb{R}} \int_{\mathbb{R}} \frac{1}{R} \Big \vert\partial_{\xi} r + \frac{1}{2}\Big \vert \Big\vert\partial_{\eta} v(R, T)\Big \vert \vert v(R, T)\vert R^{2} dR dT \notag\\
	&\leq E(v) \int_{\mathbb{R}} \int_{\mathbb{R}} v(R, T)^{2} dR dT + \epsilon(E(v)) \int_{\mathbb{R}} \int_{\mathbb{R}} v(R, T)^{2} dR dT.
	\end{align}
	
	\noindent Also, since $\vert Z \vert \leq \epsilon(E(v))$,
	
	\begin{align}\label{1.20}
	&\int_{\mathbb{R}} \int_{\mathbb{R}} \left(e^{2Z} - 1 + \partial_{\eta} r + \frac{1}{2} - (\partial_{\xi} r - \frac{1}{2})\right) \frac{v(R, T)^{2}}{R^{2}} R^{2} dR dT  \notag\\ 
	&\leq \epsilon(E(v)) \int_{\mathbb{R}} \int_{\mathbb{R}} v(R, T)^{2} dR dT.
	\end{align}
	
	\noindent Therefore, combining $(\ref{1.18.1})$, $(\ref{1.18.5})$, and $(\ref{1.20})$,
	
	\begin{equation}\label{1.21}
	(\ref{1.17}) \leq E(v) \int_{\mathbb{R}} \int_{\mathbb{R}} v(R, T)^{2} dR dT + \epsilon(E(v)) \int_{\mathbb{R}} \int_{\mathbb{R}} v(R, T)^{2} dR dT.
	\end{equation}

	\noindent The bulk term
	
	\begin{equation}\label{1.10}
	\int_{\mathbb{R}^{4+1}}  F(v) \ptl_R v \, \bar{\mu}_{\check{g}}
	\end{equation}
	
	\noindent is more difficult to control. Several components of this term are essentially the same as in $(\ref{1.17})$. Indeed, as in $(\ref{1.18.3})$ - $(\ref{1.18.5})$,
	
	\begin{equation}\label{1.11}
	\aligned
	\int_{\mathbb{R}^{4+1}} \frac{1}{R} \Big \vert \partial_{\eta} r - \frac{1}{2}\Big \vert \Big \vert\partial_{\xi} v(R, T)\Big \vert \Big\vert\partial_{R} v(R, T)\Big \vert \bar{\mu}_{\check{g}} \leq E(v) \int_{\mathbb{R}} \int_{\mathbb{R}} v(R, T)^{2} dR dT,
	\endaligned
	\end{equation}
	
	\noindent and
	
	\begin{equation}\label{1.12}
	\int_{\mathbb{R}^{4+1}} \frac{1}{R} \Big\vert \partial_{\xi} r + \frac{1}{2}\Big\vert \Big\vert\partial_{\eta} v(R, T)\Big\vert \Big\vert\partial_{R} v(R, T)\Big\vert  \bar{\mu}_{\check{g}} \leq E(v) \int_{\mathbb{R}} \int_{\mathbb{R}} v(R, T)^{2} dR dT.
	\end{equation}
	
	\noindent Also,
	
	\begin{align}\label{1.13}
	&\int_{\mathbb{R}^{4+1}} \Big\vert\partial_{\eta} r + \frac{1}{2}\Big\vert \Big\vert\frac{v(R, T)}{R^{2}}\Big\vert \vert v_{R}(R, T)\vert  \bar{\mu}_{\check{g}} \notag\\ 
	&\leq  \int_{\mathbb{R}^{4+1}} \frac{1}{R} \Big\vert\partial_{\eta} r + \frac{1}{2}\Big\vert v_{R}(R, T)^{2} \bar{\mu}_{\check{g}} + \int_{\mathbb{R}}\int_{\mathbb{R}} \Big\vert\partial_{\eta} r + \frac{1}{2}\Big\vert v(R, T)^{2} dR dT \notag\\
	&\leq E(v) \int_{\mathbb{R}} \int_{\mathbb{R}} v(R, T)^{2} dR dT + \epsilon(E(v)) \int_{\mathbb{R}} \int_{\mathbb{R}} v(R, T)^{2} dR dT,
	\end{align}
	
	\noindent and
	
	\begin{align}\label{1.14}
	&\int_{\mathbb{R}^{4+1}} \Big\vert\partial_{\xi} r - \frac{1}{2}\Big\vert \Big\vert \frac{v(R, T)}{R^{2}}\Big\vert \vert v_{R}(R, T)\vert \bar{\mu}_{\check{g}} \notag \\ 
	&\leq  E(v) \int_{\mathbb{R}} \int_{\mathbb{R}} v(R, T)^{2} dR dT + \epsilon(E(v)) \int_{\mathbb{R}} \int_{\mathbb{R}} v(R, T)^{2} dR dT.
	\end{align}
	
	\noindent It remains to compute
	
	\begin{equation}\label{1.15}
	\int_{\mathbb{R}^{4+1}} (e^{2Z} - 1) \frac{v(R, T)}{R^{2}} v_{R}(R, T) \bar{\mu}_{\check{g}}, 
	\end{equation}
	
	\noindent and
	
	\begin{equation}\label{1.15.1}
	\int_{\mathbb{R}^{4+1}} e^{2Z} \frac{R^{2}}{r^{2}} v(R, T)^{3} \zeta(Rv) v_{R}(R, T) \bar{\mu}_{\check{g}}.
	\end{equation}
	
	
Consider the Taylor's theorem for $e^{2Z}:$

\begin{align} \label{1.15.2}
    e^{2Z} = 1 + 2Z + 2 Z^{2} + \frac{4}{3} Z^{3} + \cdots + \frac{(2Z)^n e^{2 \chi}} {n !}
\end{align}
	\noindent for some $\chi$ between $0$ and $Z$. Since $ \vert Z \vert \leq \epsilon(E(v))$ it suffices to consider only $1 + 2Z$. The other terms follow in a similar manner. In view of Lemma \ref{Z-rep}, to estimate
	
	\begin{equation}\label{1.15.3}
	2 \int \int Z(R, T) \frac{v(R, T)}{R^{2}} v_{R}(R, T) R^{3} dR dT, 
	\end{equation}
	
	\noindent we shall split $Z(R, T) = Z_{1}(R, T) + Z_{2}(R, T)$, where $Z_{1}(R, T)$ has good integral properties and $Z_{2}(R, T)$ has a radial derivative with good integral properties. By equation $(7d)$,
	
	\begin{equation}\label{2.1}
	\partial_{\xi \eta}^{2} Z = -\frac{1}{2} (\partial_{\eta} u)(\partial_{\xi} u) - \frac{e^{2Z}}{8} \frac{f^{2}(u)}{r^{2}}.
	\end{equation}
	
	\noindent Then by Duhamel's principle and the fact that $Z = Z_{R} = 0$ on $R = 0$,
	
	\begin{equation}\label{2.2}
	Z(R, T) = \int_{0}^{R} \int_{T - (R - s)}^{T + (R - s)} (\partial_{\xi \eta}^{2} Z)(s, t) dt ds.
	\end{equation}

	\noindent Now the term
	
	\begin{equation}\label{2.1.1}
	-\int_{0}^{R} \int_{T - (R - s)}^{T + (R - s)} \frac{e^{2Z}}{8} \frac{f^{2}(sv)}{r^{2}} ds dt = Z_{2}^{(1)}(R, T)
	\end{equation}
	
	\noindent may be safely placed in $Z_{2}$. Indeed, integrating by parts,
	
	\begin{align}\label{2.25}
	&\int_{\mathbb{R}} \int_{\mathbb{R}} Z_{2}^{(1)}(R, T) v(R, T) v_{R}(R, T) R dR dT \\ 
	&= -\frac{1}{2} \int_{\mathbb{R}} \int_{\mathbb{R}} Z_{2}^{(1)}(R, T) v(R, T)^{2} dR dT - \frac{1}{2} \int_{\mathbb{R}} \int_{\mathbb{R}} \partial_{R} (Z_{2}^{(1)}(R, T)) v(R, T)^{2} R dR dT.
	\end{align}
	
	\noindent Since $ \vert Z_{2}^{(1)}(R, T) \vert \leq \int \int v(r, t)^{2} dr dt$, 
	
	\begin{equation}\label{2.26}
	\frac{1}{2} \int \int Z_{2}^{(1)} v(R, T)^{2} dR dT \leq \left(\int \int v(R',T')^{2} dR' dT' \right)^{2}.
	\end{equation}
	
	\noindent Meanwhile, we can directly compute
	
	\begin{equation}\label{2.27}
	\aligned
	-\partial_{R} (\int_{0}^{R} \int_{T - (R - s)}^{T + (R - s)} \frac{e^{2Z}}{8} \frac{f^{2}(sv)}{r^{2}} ds dT') = -\int_{0}^{R} \frac{e^{2Z}}{8} \frac{f^{2}(sv)}{r^{2}}(s, T + (R - s)) ds \\
	- \int_{0}^{R} \frac{e^{2Z}}{8} \frac{f^{2}(sv)}{r^{2}}(s, T - (R - s)) ds.
	\endaligned
	\end{equation}
	
	\noindent Since $ \vert Z \vert $ is small,
	
	\begin{equation}\label{2.27.1}
	\sup_{R} \int_{0}^{R} \frac{e^{2Z}}{8} \frac{f^{2}(sv)}{r^{2}}(s, T + (R - s)) ds \leq \int v^{2}(\eta, \xi = R + T) d\eta,
	\end{equation}
	
	\noindent and
	
	\begin{equation}\label{2.27.2}
	\sup_{R} \int_{0}^{R} \frac{e^{2Z}}{8} \frac{f^{2}(sv)}{r^{2}}(s, T - (R + s)) ds \leq \int v^{2}(\eta = T - R, \xi) d\xi.
	\end{equation}
	
	\noindent Then since Hardy's inequality implies
	
	\begin{equation}\label{2.27.3}
	\int_{0}^{\infty} v^{2}(s, T + R - s) s ds \leq E(v),
	\end{equation}
	
	\noindent and
	
	\begin{equation}\label{2.27.4}
	\int_{0}^{\infty} v^{2}(s, T - R + s) s ds \leq E(v),
	\end{equation}
	
	\noindent so therefore,
	
	\begin{equation}\label{2.27.5}
	- \frac{1}{2} \int_{\mathbb{R}} \int_{\mathbb{R}} \partial_{R} (Z_{2}^{(1)}(R, T)) v(R, T)^{2} R dR dT \leq E(v) \int_{\mathbb{R}} \int_{\mathbb{R}} v^{2}(R, T) dR dT.
	\end{equation}
	
	\noindent Now by the product rule, $u = Rv$, 
	
	\begin{equation}\label{2.27.6}
	-\frac{1}{2} (\partial_{\xi} u)(\partial_{\eta} u) = \frac{1}{2} v^{2} + \frac{1}{2} R v(\partial_{\eta} v) - \frac{1}{2} R v (\partial_{\xi} v) - \frac{1}{2} R^{2} (\partial_{\xi} v)(\partial_{\eta} v).
	\end{equation}
	
	\noindent Similar to $(\ref{2.1.1})$,
	
	\begin{equation}\label{2.27.7}
	Z_{2}^{(2)}(R, T) = -\frac{1}{2} \int_{0}^{R} \int_{T - (R - s)}^{T + (R - s)} v^{2}(s, t) dr dt,
	\end{equation}
	
	\noindent and can be analyzed in a manner similar to $(\ref{2.25})$ - $(\ref{2.27.5})$. Next, making a change of variables,
	
	\begin{align}\label{2.4.0}
	&\int_{0}^{R} \int_{T - (R - s)}^{T + (R - s)} s (\partial_{\eta} v) v dt ds = \frac{1}{2} \int_{T - R}^{T + R} \int_{T - R}^{\xi} (\xi - \eta) (\partial_{\eta} v) v d\eta d\xi \notag\\
	&= \frac{1}{4} \int_{T - R}^{T + R} \int_{T - R}^{\xi} (\xi - \eta) \partial_{\eta}(v^{2}) d\eta d\xi.
	\end{align}
	
	\noindent Integrating by parts,
	
	\begin{align}
	&= \frac{1}{4} \int_{T - R}^{T + R} (\xi - (T - R)) v^{2}(T - R, \xi) d\xi - \frac{1}{4} \int_{T - R}^{T + R} \int_{T - R}^{\xi} v^{2}(\eta, \xi) d\eta d\xi \notag\\ &= \frac{1}{4} \int_{0}^{R} s \cdot v^{2}(s, (T - R) + s) ds + \frac{1}{4} \int_{0}^{R} \int_{T - (R - s)}^{T + (R - s)} s v^{2}(s, t) dt ds \notag\\
	&= Z_{1}^{(1)}(R, T) + Z_{2}^{(3)}(R, T).
	\end{align}
	
	\noindent The contribution of $Z_{2}^{(3)}(R, T)$ may again be estimated as in $(\ref{2.25})$ - $(\ref{2.27.5})$. Now,
	
	\begin{equation}\label{2.5.1}
	Z_{1}^{(1)}(R, T) = \frac{1}{4} \int_{0}^{R} s \cdot v^{2}(s, (T - R) + s) ds \leq R \int_{\eta = T - R} v^{2}(\xi, \eta) d\xi.
	\end{equation}
	
	\noindent Then by $(\ref{1.18.3})$ and $(\ref{2.27.3})$,
	
	\begin{align}\label{2.5.2}
	&\int_{\mathbb{R}} \int_{\mathbb{R}} v_{R}(R, T) v(R, T) R Z_{1}^{(1)}(R, T) dR dT \notag\\ 
	&\leq \int \int v_{R}(\xi, \eta) R^{2} v(\xi, \eta) (\int v^{2}(\xi, \eta = T - R) d\xi) d\xi d\eta \leq E(v) \int_{\mathbb{R}} \int_{\mathbb{R}} v^{2}(R, T) dR dT.
	\end{align}

	\noindent By a similar calculation,
	
	\begin{align}
	&\int_{0}^{R} \int_{T - (R - s)}^{T + (R - s)} s (\partial_{\xi} v) v dT' ds = \frac{1}{2} \int_{T - R}^{T + R} \int_{\eta}^{T + R} (\xi - \eta) (\partial_{\xi} v) v d\xi d\eta \notag\\
	&= \frac{1}{4} \int_{T - R}^{T + R} \int_{\eta}^{T + R} (\xi - \eta) \partial_{\xi}(v^{2}) d\xi d\eta \notag\\
	&= \frac{1}{4} \int_{T - R}^{T + R} (T + R - \eta) v^{2}(\eta, \xi = T + R) d\eta - \frac{1}{4} \int_{T - R}^{T + R} \int_{\eta}^{T + R} v^{2}(\eta, \xi) d\xi d\eta. \label{2.7.0}
	\end{align}
	\noindent Then set
	
	\begin{equation}\label{2.7.1}
	Z_{1}^{(2)}(R, T) = \frac{1}{4} \int_{T - R}^{T + R} (T + R - \eta) v^{2}(\eta, \xi = T + R) d\eta
	\end{equation}
	
	\noindent and
	
	\begin{equation}\label{2.7.2}
	Z_{2}^{(4)}(R, T) = - \frac{1}{4} \int_{T - R}^{T + R} \int_{\eta}^{T + R} v^{2}(\eta, \xi) d\xi d\eta = -\frac{1}{4} \int_{0}^{R} \int_{T - (R - s)}^{T + (R - s)} v^{2}(s, T') dT' ds.
	\end{equation}
	
	\noindent It only remains to compute
	
	\begin{equation}\label{2.7.3}
	\frac{1}{2} \int_{0}^{R} \int_{T + (R - s)}^{T + (R - s)} s^{2} (\partial_{\xi} v(s, T')) (\partial_{\eta} v(s, T')) ds dT' = \frac{1}{2} \int_{T - R}^{T + R} \int_{\eta}^{T + R} (\xi - \eta)^{2} (\partial_{\xi} v)(\partial_{\eta} v) d\xi d\eta.
	\end{equation}
	
	\noindent Integrating by parts,
	
	\begin{equation}\label{2.8}
	\frac{1}{2} \int_{T - R}^{T + R} \int_{\eta}^{T + R} (\xi - \eta)^{2} (\partial_{\xi} v)(\partial_{\eta} v) d\xi d\eta = \frac{1}{2} \int_{T - R}^{T + R} (T + R - \eta)^{2} v (\partial_{\eta} v)(T + R, \eta) d\eta
	\end{equation}
	
	\begin{equation}\label{2.9}
	- \int_{T - R}^{T + R} \int_{\eta}^{T + R} (\xi - \eta) v (\partial_{\eta} v) d\xi d\eta -\frac{1}{2} \int_{T - R}^{T + R} \int_{\eta}^{T + R} (\xi - \eta)^{2} v (\partial_{\xi \eta} v) d\xi d\eta.
	\end{equation}

	\noindent By the Cauchy - Schwartz inequality,
	
	\begin{align}\label{2.11.0}
	&\frac{1}{2} \int_{T - R}^{T + R} (T + R - \eta)^{2} v (\partial_{\eta} v)(T + R, \eta) d\eta = Z_{2}^{(3)}(R, T) \notag\\ &= \frac{1}{2} \int_{0}^{R} s^{2} v(s, T + R - s) (\partial_{\eta} v(s, T + R - s)) ds \notag\\ 
	&\leq \frac{R^{1/2}}{2} \left(\int s^{3} (\partial_{\eta} v)^{2} (\xi = T + R, \eta) d\eta \right)^{1/2} \left(\int v(\xi = T + R, \eta)^{2} d\eta \right)^{1/2}.
	\end{align}
	
	\noindent Again by $(\ref{1.18.3})$ and $(\ref{2.27.3})$,
	
	\begin{align}\label{2.11.1}
	&\int_{\mathbb{R}} \int_{\mathbb{R}} Z_{2}^{(3)}(R, T) v_{R}(R, T) v(R, T) R dR dT \leq \frac{1}{2} \int_{\mathbb{R}} \int_{\mathbb{R}} (\xi - \eta)^{3/2} v_{R}(\xi, \eta) v(\xi, \eta) \notag\\
	&\leq E(v) \int_{\mathbb{R}} \int_{\mathbb{R}} v(R, T)^{2} dR dT.
	\end{align}
	
	\noindent Also observe that
	
	\begin{equation}\label{2.10.0}
	\int_{T - R}^{T + R} \int_{\eta}^{T + R} (\xi - \eta) v (\partial_{\eta} v) d\xi d\eta = (\ref{2.4.0}),
	\end{equation}
	
	\noindent may be computed in exactly the same manner as in \ref{2.4.0} splitting into $Z_{2}^{(4)} + Z_{1}^{(5)}$. Now 
	
	\begin{equation}\label{2.10.1}
	\partial_{\xi \eta} v = (\partial_{TT} - \partial_{RR}) v.
	\end{equation}

	\noindent Then 
	
	\begin{equation}\label{2.12.0}
	\aligned
	-\frac{1}{2} \int_{T - R}^{T + R} \int_{\eta}^{T + R} (\xi - \eta)^{2} v (\partial_{\xi \eta} v) d\xi d\eta = \frac{3}{2} \int_{T - R}^{T + R} \int_{\eta}^{T + R} (\xi - \eta) v(\partial_{R} v) d\xi d\eta \\ - \frac{1}{2} \int_{T - R}^{T + R} \int_{\eta}^{T + R} (\xi - \eta)^{2} v F(v) d\xi d\eta.
	\endaligned
	\end{equation}
	
	
	\begin{equation}\label{2.13.0}
	\frac{3}{2} \int_{T - R}^{T + R} \int_{\eta}^{T + R} (\xi - \eta) v (\partial_{R} v) d\xi d\eta = \frac{3}{2} \int_{0}^{R} \int_{T - (R + s)}^{T + (R - s)} v (\partial_{s} v) s ds dt,
	\end{equation}
	
	\noindent and so integrating by parts,
	
	\begin{equation}\label{2.14}
	\aligned
	= \frac{3}{4} \int_{0}^{R} v^{2}(s, T + R - s) s ds + \frac{3}{4} \int_{0}^{R} v^{2}(s, T - R + s) s ds \\ - \frac{3}{2} \int_{0}^{R} \int_{T - (R + s)}^{T + (R - s)} v^{2}(s, T') ds dT' = Z_{2}^{(5)}(R, T) + Z_{1}^{(6)}(R, T).
	\endaligned
	\end{equation}
	
	
	
	\noindent Next split $F(v) = F_{1}(v) + F_{2}(v)$ with
	
	\begin{equation}\label{2.16}
	\aligned
	F_{1}(v) = \frac{1}{R} (\partial_{\eta} r - \frac{1}{2}) (\partial_{\xi} v) + \frac{1}{R} (\partial_{\xi} r + \frac{1}{2}) (\partial_{\eta} v), \\
	F_{2}(v) =  (e^{2Z} - 1 + \partial_{\eta} r + \frac{1}{2} - (\partial_{\xi} r - \frac{1}{2})) \frac{v}{R^{2}} + e^{2Z} \frac{R^{2}}{r^{2}} v^{3} \zeta(Rv).
	\endaligned
	\end{equation}
	
	\noindent By the radial Sobolev embedding theorem, the boundedness of $\zeta(Rv)$, and 
	\[ \Big\vert \partial_{\xi} r + \frac{1}{2}\Big\vert , \Big\vert \partial_{\eta} r - \frac{1}{2}\Big\vert \quad \text{and} \quad  \vert Z \vert  \leq \epsilon(E(v)), \] 
	
	then
	
	\begin{equation}\label{2.23}
	\int_{0}^{R}  \int_{T - (R + s)}^{T + (R - s)} F_{2}(v) v s^{2} ds dT' \leq \epsilon(E(v)) \int_{\mathbb{R}} \int_{\mathbb{R}} v^{2}(R, T) dR dT.
	\end{equation}
	
	\noindent Next, integrating by parts,
	
	\begin{equation}\label{2.17}
	\int_{0}^{R} \int_{T - (R - s)}^{T + (R - s)} s (\partial_{\eta} r - \frac{1}{2}) v (\partial_{\xi} v) ds dT',
	\end{equation}
	
	\noindent may be computed in a way that is very similar to $(\ref{2.6})$, the only additional term is a term of the form
	
	\begin{equation}\label{2.18}
	-\frac{1}{2} \int_{0}^{R} \int_{T - (R - s)}^{T + (R - s)} s (\partial_{\xi \eta} r) v^{2} ds dT'.
	\end{equation}
	
	\noindent Now since $ \vert \ptl^2_{\xi \eta} r \vert \leq \frac{\epsilon(E(v))}{r} $, this term may be bounded by
	
	\begin{equation}\label{2.19}
	\int_{0}^{R} \int_{T - (R - s)}^{T + (R - s)} v^{2}(s, T') ds dT'.
	\end{equation}
	
	\noindent A similar computation may be made for
	
	\begin{equation}\label{2.20}
	\int_{0}^{R} \int_{T - (R - s)}^{T + (R - s)} s (\partial_{\xi} r - \frac{1}{2}) v (\partial_{\eta} v) ds dT'.
	\end{equation}
	
	
	
	
	

	
	
	

	
	
	\noindent Thus,
	
	\begin{equation}
	(\ref{1.15}) \leq \epsilon(E(v)) \int \int v(R, T)^{2} dR dT.
	\end{equation}
	
	\noindent Many of the computations in $(\ref{1.15})$ may also be used to estimate $(\ref{1.15.1})$. Expanding $\zeta(z)$ for $ \vert z \vert \leq \epsilon$, by the radial Sobolev embedding theorem which implies $R  \vert v(R, T) \vert \leq \epsilon(E(v))$,
	
	\begin{equation}
	\zeta(Rv) = c_{0} + c_{1} (Rv) + c_{2} (Rv)^{2} + ...
	\end{equation}
	
	\begin{equation}\label{1.16}
	\aligned
	c_{j} \int_{\mathbb{R}} \int_{\mathbb{R}}  \frac{R^{2 + j}}{r^{2}} v^{3 + j}(R, T) v_{R}(R, T) R^{3} dR dT = \frac{c_{0}}{4 + j} \int_{\mathbb{R}}  \int_{\mathbb{R}}  \frac{R^{5 + j}}{r^{2}} \partial_{R}(v(R, T)^{4 + j}) dR dT \\
	= -\frac{(5 + j)c_{j}}{4 + j} \int_{\mathbb{R}}  \int _{\mathbb{R}} \frac{R^{4 + j}}{r^{2}} v(R, T)^{4 + j} dR dT - \frac{c_{j}}{2} \int_{\mathbb{R}}  \int_{\mathbb{R}}  \frac{R^{5 + j}}{r^{3}} (\partial_{R} r) v(R, T)^{4 + j} dR dT.
	\endaligned
	\end{equation}
	
	\noindent Therefore,
	
	\begin{equation}\label{1.17.0}
	\int_{\mathbb{R}} \int_{\mathbb{R}}  \frac{R^{2}}{r^{2}} v(R, T)^{3} v_{R}(R, T) R^{3} dR dT \lesssim \epsilon(E(v)) \int_{\mathbb{R}}  \int_{\mathbb{R}}  v(R, T)^{2} dR dT.
	\end{equation}
	
	\noindent Next, we estimate
	
	\begin{equation}\label{1.17.1}
	c_{j} \int_{\mathbb{R}}  \int_{\mathbb{R}}  \frac{R^{2 + j}}{r^{2}} Z(R, T) v^{3 + j}(R, T) v_{R}(R, T) R^{3} dR dT.
	\end{equation}
	
	\noindent Again splitting $Z(R, T) = Z_{1}(R, T) + Z_{2}(R, T)$,
	
	\begin{equation}\label{1.17.2}
	\aligned
	\frac{c_{0}}{4 + j} \int_{\mathbb{R}} \int_{\mathbb{R}}  \frac{R^{5 + j}}{r^{2}} Z_{1}(R, T) v(R, T)^{3 + j} \partial_{R} v(R, T) dR dT \\ = \frac{c_{0}}{4 + j} \int_{\mathbb{R}}  \int_{\mathbb{R}}  \frac{R^{5 + j}}{r^{2}} Z_{1}(R, T) \partial_{R}(v(R, T)^{4 + j}) dR dT \\ = -\frac{(5 + j)c_{j}}{4 + j} \int_{\mathbb{R}}  \int_{\mathbb{R}}  \frac{R^{4 + j}}{r^{2}} Z_{1}(R, T) v(R, T)^{4 + j} dR dT \\ - \frac{c_{j}}{2} \int_{\mathbb{R}}  \int_{\mathbb{R}}  \frac{R^{5 + j}}{r^{3}} Z_{1}(R, T) (\partial_{R} r) v(R, T)^{4 + j} dR dT \\ - \frac{c_{j}}{4 + j} \int_{\mathbb{R}}  \int_{\mathbb{R}}  \frac{R^{5 + j}}{r^{2}} (\partial_{R} Z_{1}(R, T)) v(R, T)^{4 + j} dR dT.
	\endaligned
	\end{equation}
	
	\noindent Then by the radial Sobolev embedding theorem, properties of $r$, and the analysis of $(\ref{1.15.3})$ implies that
	
	\begin{equation}\label{1.17.3}
	(\ref{1.17.2}) \leq \epsilon(E(v)) \int_{\mathbb{R}}  \int_{\mathbb{R}}  v(R, T)^{2} dR dT.
	\end{equation}
	
	\noindent The radial Sobolev embedding theorem, properties of $r$, and the analysis of $(\ref{1.15.3})$ also implies that
	
	\begin{equation}\label{1.17.4}
	\frac{c_{0}}{4 + j} \int_{\mathbb{R}}  \int_{\mathbb{R}}  \frac{R^{5 + j}}{r^{2}} Z_{2}(R, T) v(R, T)^{3 + j} \partial_{R} v(R, T) dR dT \leq \epsilon(E(v)) \int_{\mathbb{R}}  \int_{\mathbb{R}}  v(R, T)^{2} dR dT.
	\end{equation}
	
	\noindent Therefore, we have now proved
	
	\begin{equation}\label{1.17.5}
	\int_{\mathbb{R}}  \int_{\mathbb{R}}  v(R, T)^{2} dR dT \leq E(v) + \epsilon(E(v)) \int_{\mathbb{R}}  \int_{\mathbb{R}}  v(R, T)^{2} dR dT.
	\end{equation}
	
	\noindent This completes the proof of Theorem $\ref{t1.1}$. $\Box$
	
\end{proof}

\begin{theorem}
	If $v$ is a global solution of \eqref{wave-full} for $E_0 < \eps^2$, then
	\begin{align}
	v \to v_S, \quad \textnormal{as} \quad T \to \infty,
	\end{align}
 in the energy topology, $\Vert \cdot \Vert_{L^2 (\mathbb{R}^4) \times \dot{H}^1 (\mathbb{R}^4 ) }, $
	where $v_S$ is a solution of linearized \eqref{eq:waveuTR}
	\begin{align}
	\leftexp{4+1}{\square} v_S =
	-\ptl_T^2 v_S +\ptl_{R}^2 v_S +\frac{3}{R}\ptl_{R} v_S =0
	\end{align}
	which in turn is equivalent to linearized \eqref{eq:waveuTR}
	\begin{align}
	-\ptl_T^2 u_S +\ptl_{R}^2 u_S +\frac{1}{R}\ptl_R u_S -\frac{1}{R^2} u_S =0
	\end{align}
 for $u_S \fdg = R v_S.$
\end{theorem}
\begin{proof}
The methodology of the proof is similar to that of Proof of Problem II in \cite{BN_17}, but it now extends to the full wave maps equation \eqref{wave-full} (recall that, $u=Rv$).

	Consider a large time $T_0$ and let $v =\bar{v} + v_S$, such that $\bar{v}$ and $v_S$ are solutions of 
	\begin{equation}\label{barv}
	\left. \begin{array}{rcl}
	\leftexp{4+1}{\square}\, \bar{v} &=& F(v)\,\,\,\,\,\,\,\,\,\,\,\,\,\,\, \,\,\,\,\,\,\,\,\,\,\,\,\,\,\,\,\,\,\,\,\,\,\,\,\,\,\,\,\,\textnormal{on}\,\, \mathbb{R}^{4+1}\\
	\bar{v}_0 = \bar{v} (T_0, x) =0 & \textnormal{and}& \bar{v}_1 = \ptl_T \bar{v} (T_0, x)=0 \,\,\,\,\,\,\,\, \textnormal{on}\,\, \mathbb{R}^4\\\end{array} 
	\right\}
	\end{equation}
	where the forcing term
	\begin{align}
	F(v) \fdg=& \left(e^{2Z} - 1 + \left(\frac{r}{R}\partial_{\eta} r + \frac{1}{2}\right) - \left(\frac{r}{R}\partial_{\xi} r - \frac{1}{2}\right)\right) \frac{v}{r^{2}}  + 2 \ptl_\xi v \ptl_\eta \log \left(\frac{r}{R}\right) \notag\\ &+ 2 \ptl_\eta v \ptl_\xi \log\left(\frac{r}{R}\right)  + e^{2Z} \frac{R^{2}}{r^{2}} v^{3} \zeta(R v)
	\end{align}
	and \begin{equation}\label{vS}
	\left. \begin{array}{rcl}
	\leftexp{4+1}{\square}\, v_S &=& 0\,\,\,\,\,\,\,\,\,\,\,\,\,\,\, \,\,\,\,\,\,\,\,\,\,\,\,\,\,\,\,\,\,\,\,\,\,\,\,\,\,\,\,\,\,\,\,\,\,\,\textnormal{on}\,\, \mathbb{R}^{4+1}\\
	(v_S)_0 = v (T_0, x) & \textnormal{and}& (v_S)_1 = \ptl_T v (T_0, x) \,\,\,\,\,\,\,\, \textnormal{on}\,\, \mathbb{R}^4\\\end{array} 
	\right\}
	\end{equation}
	so that $v(T) = \bar{v}(T) + S(T-T_0) (v_0, v_1)$
	
	where $S(T) (v_0,v_1)$ is the linear evolution operator of the wave equation \eqref{vS}. It follows from the triangle inequality that $\bar{E}(\bar{v}) \leq E(v)$, where 
	\begin{align}
	\bar{E} (\bar{v}) =& \Vert \ptl_T \bar{v} \Vert^2_{L^2 (\mathbb{R}^4)} + \Vert \grad \bar{v} \Vert^2_{L^2 (\mathbb{R}^4)} + 
	\halb \Vert \bar{v} \Vert^4_{L^4 (\mathbb{R}^4)}  
 \intertext{and}
 	\bar{E} (v) =& \Vert \ptl_T v \Vert^2_{L^2 (\mathbb{R}^4)} + \Vert \grad v \Vert^2_{L^2 (\mathbb{R}^4)} + 
	\halb \Vert v \Vert^4_{L^4 (\mathbb{R}^4)}  .
	\end{align}
 
 We have the identity, 

 \begin{align} \label{instiden}
	\frac{\ptl}{\ptl T} \left( \halb \ip{\ptl_T \bar{v}}{\ptl_T \bar{v}} + \halb \ip{\grad \bar{v}}{\grad \bar{v}}\right) =  - \ip{\ptl_T \bar{v}}{F (v)}
	\end{align}
	where $\ip{X}{Y} =  \int X \cdot Y \, \bar{\mu}_{\check{g}},$ which is an instantaneous version of the energy identity: 
 \begin{align}
	\int_{\Sigma_{T_1}} \check{e} (\bar{v}) \bar{\mu}_{\check{q}} - \int_{\Sigma_{T_2}} \check{e} (\bar{v}) \bar{\mu}_{\check{q}} = \int  \ptl_T \bar{v} F(v)\bar{\mu}_{\check{g}}
	\end{align}
  where $ \check{e} (\bar{v}) = \halb \vert \ptl_T \bar{v} \vert^2 + \halb \vert \grad \bar{v} \vert^2. $

 Now consider the terms on the right hand side of \eqref{instiden} consecutively,
	
	\begin{enumerate}
		
		\item Firstly consider the terms \[ \int^{\infty}_{T_0} \int_{\mathbb{R}^4} \frac{1}{R} (\ptl_\eta r + \halb) \ptl_\xi v \ptl_T \bar{v} \bar{\mu}_{\check{g}}\,\, \text{and}\,\, \int^\infty_{T_0} \int_{\mathbb{R}^4} \frac{1}{R} (\ptl_\xi r - \halb) \ptl_\eta v \ptl_ T\bar{v} \bar{\mu}_{\check{g}} :\] 
		we have
		\begin{align}
		&\int_{T_0} \int_{\mathbb{R}^4} \frac{1}{R} (\ptl_\eta r + \halb) \ptl_\xi v \ptl_ T \bar{v} R^3 dR dT \\
		&=  \int_{T_0} \int_{\mathbb{R}^4} \frac{1}{R} (\ptl_\eta r + \halb) \ptl_\xi v \ptl_ T \bar{v} R^3 d\xi \eta \\
		& \leq \int_{\eta_0} (\sup_{R >0} \frac{1}{R} \int^R_{0} f^2(v)sds) (\int (dv)^2 d\xi) d\eta \\
		\intertext{going back to $(T, R, \theta)$}
		&\leq E(v) \int^\infty_{T_0} \int_{\mathbb{R}^4} \frac{v^2}{R^3} \bar{\mu}_{\check{g}}
		\end{align} 
		analogously, we have
		\begin{align}
		\int^{\infty}_{T_0} \int_{\mathbb{R}^4} \frac{1}{R} (\ptl_\xi r - \halb) (\ptl_\eta v) (\ptl_T v) \bar{\mu}_{\check{g}} 
		\leq E(v) \int^\infty_{T_0} \int_{\mathbb{R}^4} \frac{v^2}{R^3} \bar{\mu}_{\check{g}}
		\end{align}
		\item Next, the terms \[\int^\infty_{T_0} \int_{\mathbb{R}^4} (\ptl_\eta r + \halb) \frac{v}{R^2} \ptl_T \bar{v} \bar{\mu}_{\check{g}} \,\, \text{and} \,\, \int^\infty_{T_0} \int_{\mathbb{R}^4} (\ptl_\xi r - \halb) \frac{v}{R^2} \ptl_T \bar{v} \bar{\mu}_{\check{g}}\]
		using Cauchy-Schwarz 
		\begin{align}
		&\int^\infty_{T_0} \int_{\mathbb{R}^4} (\ptl_\eta r + \halb) \frac{v}{R^2} \ptl_T \bar{v} \bar{\mu}_{\check{g}}  \notag\\
		&\leq \int^\infty_{T_0} \int_{\mathbb{R}^4} \frac{1}{R} (\ptl_\eta r + \halb) (\ptl_T \bar{v})^2  \bar{\mu}_{\check{g}} 
		+ \int^\infty_{T_0} \int_{\mathbb{R}^4} \frac{1}{R^3} (\ptl_\eta r + \halb) v^2 \bar{\mu}_{\check{g}} \notag\\
		&\leq E(v) \int^\infty_{T_0} \int_{\mathbb{R}^4} \frac{v^2}{R^3} \bar{\mu}_{\check{g}} + \eps \int^\infty_{T_0} \int_{\mathbb{R}^4} \frac{v^2}{R^3} \bar{\mu}_{\check{g}}
		\end{align}
		\noindent likewise, we have
		\begin{align*}
		\int^\infty_{T_0} \int_{\mathbb{R}^4} (\ptl_\xi r - \halb) \frac{v}{R^2} \ptl_T \bar{v} \bar{\mu}_{\check{g}} \leq 
		E(v) \int^\infty_{T_0} \int_{\mathbb{R}^4} \frac{v^2}{R^3} \bar{\mu}_{\check{g}} + \eps \int^\infty_{T_0} \int_{\mathbb{R}^4} \frac{v^2}{R^3} \bar{\mu}_{\check{g}}
		\end{align*}
		\item Controlling the term
		\begin{align}
		\int^\infty_{T_0} \int_{\mathbb{R}^4} (e^{2Z} -1) \frac{v}{R^2} \ptl_T \bar{v} \bar{\mu}_{\check{g}} 
		\end{align}
		is similar,
		\begin{align}
		\int^\infty_{T_0} \int_{\mathbb{R}^4} (e^{2Z} -1) \frac{v}{R^2} \ptl_T \bar{v} \bar{\mu}_{\check{g}}&\leq \int^\infty_{T_0} \int_{\mathbb{R}^4}\frac{1}{R} (e^{2Z} -1) \bar{\mu}_{\check{g}} + \int^\infty_{T_0} \int_{\mathbb{R}^4} (e^{2Z} -1) \frac{v^2}{R^3} \bar{\mu}_{\check{g}} \notag\\
		& \leq  E(v) \int^\infty_{T_0} \int_{\mathbb{R}^4} \frac{v^2}{R^3} \bar{\mu}_{\check{g}} + \eps \int^\infty_{T_0} \int_{\mathbb{R}^4} \frac{v^2}{R^3} \bar{\mu}_{\check{g}}.
		\end{align}
		In view of the fact that 
		\[\lim_{T_0 \to \infty} \int^\infty_{T_0} \int_{\mathbb{R}^4} \frac{v^2}{R^3} \bar{\mu}_{\check{g}} =0 \] 
		we now have
		\begin{subequations}
			\begin{align}
			\lim_{T_0 \to \infty } \int^\infty_{T_0} \int_{\mathbb{R}^4} \frac{1}{R} (\ptl_\eta r + \halb) \ptl_\xi v \ptl_T\bar{v} \bar{\mu}_{\check{g}} =& 0 \\
			\lim_{T_0 \to \infty } \int^\infty_{T_0} \int_{\mathbb{R}^4} \frac{1}{R} (\ptl_\xi r - \halb) \ptl_\eta v \ptl_T\bar{v} \bar{\mu}_{\check{g}} =& 0 \\
			\lim_{T_0 \to \infty} \int^\infty_{T_0} \int_{\mathbb{R}^4} \frac{1}{R} (\ptl_\eta r + \halb) \frac{v}{R^2} \ptl_T\bar{v} \bar{\mu}_{\check{g}} =& 0 \\
			\lim_{T_0 \to \infty} \int^\infty_{T_0} \int_{\mathbb{R}^4} \frac{1}{R} (\ptl_\xi r - \halb) \frac{v}{R^2}\ptl_T\bar{v} \bar{\mu}_{\check{g}} =& 0 \\
			\lim_{T_0 \to \infty}\int^\infty_{T_0} \int_{\mathbb{R}^4} (e^{2Z} -1) \frac{v}{R^2} \ptl_T \bar{v} \bar{\mu}_{\check{g}} =&0
			\end{align}
		\end{subequations}
		from the dominated convergence theorem. 
	
		\item Now consider the term 
		\[ \int^\infty_{T_0} \int_{\mathbb{R}^4} \frac{R^2}{r^2} e^{2Z} v^3 \zeta(Rv) \ptl_T \bar{v} \bar{\mu}_{\check{g}} \]
		
		\noindent again expand out $\zeta(R\tilde{v}):$
		\begin{align}
		\zeta (Rv) = c_0 + c_1 Rv + c_2 (Rv)^2+ \cdots
		\end{align}
		
		\noindent Then the leading order term of 
		\begin{align}
		c_j \int^\infty_{T_0} \int_{\mathbb{R}^4} \frac{R^{2+j}}{r^2} e^{2Z} v^{3+j} \ptl_T \bar{v} \bar{\mu}_{\check{g}}
		\end{align}
		transforms as 
		\begin{equation}\label{1.31}
		c_{0} \int \frac{R^{2}}{r^{2}} \bar{v}^{3} \bar{v}_{T} R^{3} \bar{\mu}_{\check{g}} = \left(-\frac{5 c_{0}}{4} \int \frac{R^{2}}{r^{2}} \bar{v}^{4} \bar{\mu}_{\check{g}} \right) - \frac{c_{0}}{2} \int \frac{R^{2}}{r^{3}} \bar{v}^{4}  (\partial_{T} r) \bar{\mu}_{\check{g}}
		\end{equation}

		\noindent Now by the radial Sobolev embedding theorem, combined with the Morawetz estimates,
		
		\begin{equation}\label{1.32}
		\int_{T \geq T_{0}} \int \frac{R^{2}}{r^{3}} \bar{v}^{4}  (\partial_{T} r) \bar{\mu}_{\check{g}} \rightarrow 0
		\end{equation}
		
		\noindent as $T_{0} \rightarrow \infty$. Next, using the radial Strichartz estimate
		
		\begin{equation}
		\Big\Vert \vert x\vert^{1/2} S(t)((v_S)_0, (v_S)_1) \Big\Vert_{L_{t}^{2} L_{x}^{\infty}} \leq \Vert (v_S)_0  \Vert_{\dot{H}^{1}} + \Vert (v_S)_1\Vert_{L^{2}},
		\end{equation}
		
		\noindent so
		
		\begin{align}\label{1.33}
		\int_{T \geq T_{0}} \int \frac{R^{2}}{r^{2}} v^{2} v_S \bar{v}_{T} R^{3} dR dT \leq& \left(\int_{T \geq T_{0}} \int v^{2} dR dT \right)^{1/2} \notag     \\
		& \quad \cdot\| R^{1/2} v_S \|_{L_{T}^{2} L_{x}^{\infty}} \| \bar{v}_{T} \|_{L_{T}^{\infty} L_{x}^{2}} \| R v \|_{L_{T, x}^{\infty}}.
		\end{align}
		
		\noindent If 
		
		\begin{equation}\label{1.34}
		E(\bar{v}(T)) = \| \bar{v}_{T} \|_{L^{2}}^{2} + \| \nabla \bar{v}(T) \|_{L^{2}}^{2} + \halb \| \bar{v}(T) \|_{L^{4}}^{4},
		\end{equation}
		
		\noindent then the Sobolev embedding theorem implies that for small energy, the RHS of eq. \ref{1.33}
		
		\begin{equation}\label{1.35}
		\leq (\sup_{T \geq T_{0}} E(\bar{v}(T)))^{1/2} E (v)\left(\int_{T \geq T_{0}} \int v^{2} dR dT \right)^{1/2}.
		\end{equation}
		
		\noindent Also,
		
		\begin{equation}\label{1.36}
		\aligned
		\int_{T \geq T_{0}} \int \frac{R^{2}}{r^{2}} v \cdot v_S \cdot \bar{v} \bar{v}_{T} R^{3} dR dT \leq & \left(\int_{T \geq T_{0}} \int v^{2} dR dT \right)^{1/2} \\
		& \cdot \| R^{1/2} v_S \|_{L_{T}^{2} L_{x}^{\infty}} \| \bar{v}_{T} \|_{L_{T}^{\infty} L_{x}^{2}} \| R \bar{v} \|_{L_{T, x}^{\infty}} \\ 
		\leq & E(v)\left(\sup_{T \geq T_{0}} E(\bar{v}(T)) \right).
		\endaligned
		\end{equation}
		
		\noindent Finally,
		
		\begin{equation}\label{1.37}
		\aligned
		\int_{T \geq T_{0}} \int \frac{R^{2}}{r^{2}} v_S \bar{v}^{2} \bar{v}_{T} R^{3} dR dT \leq & \Big\Vert  \vert x \vert^{1/2} v_S \Big\Vert_{L_{T}^{2} L_{x}^{\infty}} \| v_{T} \|_{L_{T}^{\infty} L_{x}^{2}} \\
		& \quad \cdot \left(\int v + (v_S)^{2} dR dT \right)^{1/2} \| \vert x \vert \bar{v} \|_{L_{T, x}^{\infty}} \\ 
		\leq &E(v)\left(\sup_{T \geq T_{0}} E(\bar{v})(T)\right).
		\endaligned
		\end{equation}
		
		Now consider, 
		\begin{align}
		c_j\int_{T_0} \int_{\mathbb{R}^4} \frac{R^{2+j}}{r^2} Z(R, T) v^{3+j} \ptl_T \bar{v} \bar{\mu}_{\check{g}}
		\end{align}
Analogously, we have		
		\begin{align}
		c_j\int_{T_0} \int_{\mathbb{R}^4} \frac{R^{2+j}}{r^2} Z(R, T) v^{3+j} \ptl_T \bar{v} \bar{\mu}_{\check{g}} \leq c(E_0) \int v^2 dR dT 
		\end{align}

	\end{enumerate} 
\end{proof}
In summary, we considered the $2+1$ dimensional geometric equivariant wave maps equation coupled to the Einstein equations for general relativity. We reduced the geometric $2+1$ dimensional equivariant wave maps equation into a $4+1$ dimensional non-linear wave equation on the Minkowski space. This allows us to represent the solution into two parts: solutions of a homogeneous wave equation with the same initial data $v_S$ and a non-homogeneous wave equation with zero data $\bar{v}$. We then show that the energy of the latter goes to $0$ as $T \to \infty$: 

\begin{align}
    \sup_{T \geq T_0}  \left( \Vert  \ptl_T \bar{v} \Vert_{L^2 (\mathbb{R}^4)} + \Vert  \bar{v} \Vert_{\dot{H}^1 (\mathbb{R}^4)}  \right ) \to 0
\end{align}
 as $T_0 \to \infty.$ In these estimates, the non-linear Morawetz estimate \eqref{morawetz-estimate} plays an important role. We would now like to remark on the decay rates for our scattering problem in comparison with the decay rates for the linear wave equation. 
\begin{remark}
	Suppose $ \phi  \fdg \mathbb{R}^{2+1} \to \mathbb{R}$ satisfies the linear wave equation
	\begin{equation}
	\left. \begin{array}{rcl}
	\leftexp{2+1}{\square} \, \phi &=&0\,\,\,\,\,\,\,\,\,\,\,\,\,\,\, \,\,\,\,\,\,\,\,\,\,\,\,\,\,\,\,\,\,\,\,\,\,\,\,\,\,\,\text{on}\,\, \mathbb{R}^{2+1}\\
	\phi_0 = \phi (0, x) & \text{and}& \phi_1 = \ptl_T \phi (0, x) \,\,\,\,\,\,\,\, \text{on}\,\, \mathbb{R}^2\\\end{array} 
	\right\}
	\end{equation}
	such that the smooth initial data $(\phi_0, \phi_1)$ is compactly supported or sufficiently rapidly decaying, then $\phi$ admits the decay rate: 
	
	\begin{align} \label{2+1-decay}
	\vert \phi \vert \leq (1+\xi)^{-1/2} (1+\eta)^{-1/2}.
	\end{align}
\end{remark}
This decay rate has to be contrasted with the decay rate from the (standard) energy estimates:

\begin{align}
\vert \phi \vert \leq \left(1+t \right)^{- \left(\frac{n-1}{2} \right)} \left(1+ \vert t- \vert x \vert  \vert \right)^{1/2}.
\end{align}
In the following we shall prove that this decay rate is also also consistent with the decay rate for the fully coupled 2+1 equivariant Einstein-wave map system.  

\begin{corollary}
	Suppose $(M, g, U)$ is the smooth, globally hyperbolic, geodesically complete maximal development of the smooth, asymptotically flat initial data $ (\Sigma_0, q_0, k_0, U_0),$ with $U_0$ compactly supported, then $u$ admits the decay rate
	\begin{align}
	\vert u \vert \leq (1+ \xi)^{-1/2} (1+ \eta)^{-1/2} 
	\end{align}
	in the temporal asymptotic region $(T \to \infty)$ and
	\begin{align}
	\vert u \vert = \mathscr{O}(R^{-1/2})
	\end{align}
	as $R \to \infty$ along the $\eta = const.$ hypersurfaces. 
\end{corollary}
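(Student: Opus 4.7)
The plan is to combine the temporal scattering theorem proved just above with the classical sharp decay for radial free waves on $\mathbb{R}^{4+1}$, exploiting the reduction $u = R v$. First I would observe that compact support of $U_0$ on $\Sigma_0$, together with the identities $u_0 = R v_0$ and $u_1 = R v_1$, implies that the induced 4+1 radial initial data $(v_0, v_1)$ is smooth and compactly supported on $\mathbb{R}^4$. The scattering theorem then produces a free radial wave $\bar v$ with $\leftexp{4+1}{\square}\bar v = 0$ satisfying
\begin{align}
\Vert v - \bar v \Vert_{\dot H^1(\mathbb R^4)} + \Vert \ptl_T v - \ptl_T \bar v \Vert_{L^2(\mathbb R^4)} \longrightarrow 0, \quad T \to \infty,
\end{align}
and, by the Strichartz and Morawetz control already used in that proof, the scattering data $((v_S)_0,(v_S)_1)$ inherit the decay properties of $(v_0,v_1)$.

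Second, for such compactly supported smooth data the explicit representation of the radial wave on $\mathbb{R}^{4+1}$ (via descent from the standard Kirchhoff formula) yields the sharp pointwise bound
\begin{align}
|\bar v (T,R)| \lesssim \langle T+R\rangle^{-3/2} \langle T-R\rangle^{-1/2}
\end{align}
throughout the forward causal region. Writing $\bar u \fdg= R \bar v$ and passing to the double-null coordinates $\xi = T + R$, $\eta = T - R$, and using $R = \halb(\xi - \eta) \leq \xi$, we obtain
\begin{align}
|\bar u(T,R)| = R \, |\bar v(T,R)| \lesssim (1+\xi)^{-1/2}(1+\eta)^{-1/2},
\end{align}
which is precisely the desired rate for the linearized system.

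Third, to transfer this estimate from the free evolution $\bar v$ to the nonlinear $v$, I would combine the radial Sobolev embedding $R^{3/2}|(v-\bar v)(T,R)| \lesssim \Vert v - \bar v \Vert_{\dot H^1(\mathbb R^4)}$ with the nonlinear Morawetz estimate \eqref{non-Mora} and the weighted Strichartz bound $\big\Vert |x|^{1/2} S(T)(\cdot) \big\Vert_{L^2_T L^\infty_x} \lesssim \Vert \cdot \Vert_{\dot H^1 \times L^2}$ already deployed in the scattering proof. This shows that $v$ inherits the pointwise decay of $\bar v$ up to terms that vanish uniformly along timelike curves as $T \to \infty$. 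Multiplying by $R$ recovers the asserted decay for $u = R v$. The second conclusion of the corollary is then automatic by specializing to an $\eta = $ const.\ hypersurface and letting $R \to \infty$: one has $\xi = 2R + \eta \to \infty$ while $(1+\eta)^{-1/2}$ is bounded, hence $|u| = \mathscr O(\xi^{-1/2}) = \mathscr O(R^{-1/2})$.

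The main obstacle is the second step: upgrading the energy-topology convergence $\Vert v - \bar v \Vert_{\dot H^1 \times L^2} \to 0$ into a pointwise statement strong enough to retain the null factor $(1+\eta)^{-1/2}$ rather than only the uniform $\mathscr O(R^{-1/2})$ decay that radial Sobolev embedding alone supplies. This is where the combined strength of Theorem \ref{t1.1} and the weighted Strichartz estimate is essential, and where the smallness hypothesis $E_0 < \eps^2$ is used in an irreducible way; finite speed of propagation (from compact support of $U_0$) keeps the argument manageable by restricting the relevant region to the forward light cone of the data.
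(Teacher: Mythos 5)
Your proposal reaches the same conclusion by a genuinely different route, and it is worth comparing the two. The paper establishes the decay for the scattering state $\bar u = R\bar v$ by observing that $\bar u$ solves the linearized equivariant equation $\leftexp{2+1}{\square}\,\bar u - \bar u/R^2 = 0$, then invoking the Christodoulou--Tahvildar-Zadeh projection trick: $\phi_1 = \bar u\cos\theta$ and $\phi_2 = \bar u\sin\theta$ are genuine solutions of the flat $2+1$ scalar wave equation, and the classical Kubota rate $|\phi_i|\lesssim(1+\xi)^{-1/2}(1+\eta)^{-1/2}$ for $2+1$ free waves is imported directly. You instead work at the $4+1$ level, invoke the decay $|\bar v|\lesssim\langle T+R\rangle^{-3/2}\langle T-R\rangle^{-1/2}$ from the explicit radial representation, and then multiply by $R\leq\xi$ to get back the Kubota rate. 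Since $\bar u = R\bar v$, the two bounds are arithmetically equivalent; the paper's route trades the $4+1$ explicit formula for the well-documented $2+1$ theory, whereas yours is self-contained in $\mathbb{R}^{4+1}$. Both are legitimate.

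Two caveats on your version. First, the $4$-dimensional radial Sobolev (Strauss) inequality for $\dot H^1$ gives $R\,|v-\bar v|\lesssim\Vert v-\bar v\Vert_{\dot H^1(\mathbb{R}^4)}$, not $R^{3/2}|v-\bar v|$ as you wrote; the exponent $3/2$ would require the full inhomogeneous $H^1$ norm. With the corrected exponent, multiplying by $R$ yields only $\Vert u - \bar u\Vert_{L^\infty}\to 0$, with no extra spatial weight; this matches what the paper also states. Second, you correctly flag (and so does a careful reading of the paper's proof) that passing from ``$\Vert u-\bar u\Vert_{L^\infty}\to 0$ together with the Kubota rate for $\bar u$'' to the pointwise Kubota rate for $u$ itself requires the error to decay at a comparable rate, which the energy-topology convergence alone does not supply; the paper's proof likewise stops at the rate for $\bar u$. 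Since that shortfall is shared with the source, it is a fair reflection of the argument's rigour rather than a defect introduced by you.
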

\begin{proof}
	From our scattering results, we have 
	\begin{align}
 \Vert v - v_S \Vert \to 0, \quad \textnormal{as} \quad  T \to \infty
 \end{align}
 in the energy topology. From the radial Sobolev embedding, this implies
	
 \begin{align} \label{u-Linfty-estimate}
 \Vert u - u_S \Vert_{L^\infty} \to 0, \quad T \to \infty, 
 \end{align}
 where $u = Rv$ and $u_S = R v_S.$
	Recall that if $v_S$ satisfies $\leftexp{4+1}{\square} v_S =0$ then $u_S$ satisfies 
	\begin{align}
	\leftexp{2+1}{\square} u_S - \frac{u_S}{R^2} =0.
	\end{align}
	Following \cite{chris_tah1}, let us introduce the following `projection' formulas 
	\begin{align}
	u_1 = u_S \cos \theta \notag\\
	u_2 = u_S \sin \theta
	\end{align} 
 so that $u_S = u_1 \cos \theta + u_2 \sin \theta. $ 
Now consider the wave operator $\leftexp{2+1}{ \square} u_1$. We have 

 \begin{align}
     \leftexp{2+1}{ \square} u_1 =& - \frac{\ptl^2 u_1 }{ \ptl T^2} 
     + \frac{\ptl^2 u_1 }{ \ptl R^2}  + \frac{1}{R} \frac{\ptl u_1 }{ \ptl R} + \frac{1}{R^2} \frac{\ptl^2 u_1 }{ \ptl \theta ^2} \notag \\
     =& \cos \theta \left (  - \frac{\ptl^2 u_S  }{ \ptl T^2} +
     \frac{\ptl^2 u_S  }{ \ptl R^2} + \frac{1}{R} \frac{\ptl u_S  }{ \ptl R } - \frac{u_S}{R^2} \right)  \notag \\
     =& \cos \theta \left ( \leftexp{2+1}{ \square} u_S - \frac{u_S}{R^2} \right) =0. \notag \\     
 \end{align}
An analogous computation holds for $u_2.$ We then have,	
	\begin{align}
	\leftexp{2+1}{\square} u_1 =&\,0, \quad \mathbb{R}^{2+1} \\
	\leftexp{2+1}{\square} u_2 =&\,0, \quad \mathbb{R}^{2+1}.
	\end{align}
	If we turn to the 2+1 decay rates, we have from \eqref{2+1-decay}
	\begin{align}
	 \vert u_1 \vert  \leq  (1+ \xi)^{-1/2} (1+ \eta)^{-1/2} \notag\\
\vert 	u_2 \vert  \leq  (1+ \xi)^{-1/2} (1+ \eta)^{-1/2} 
	\end{align}
	which, in view of the previous equations, yields
	\begin{align}
\vert 	u_S  \vert \leq (1+ \xi)^{-1/2} (1+ \eta)^{-1/2}.
	\end{align}
	\end{proof}

\bmhead{Competing Interests and Funding} 
The author declares that there no competing interests.  

\bibliography{central-bib(Harvard).bib}


\end{document}